\newtheorem{definition}{Definition}[section]
\newtheorem{theorem}{Theorem}[section]
\newtheorem{Proposition}{Proposition}[section]
\numberwithin{equation}{section}
\newcommand{\p}{\prime}
\newcommand{\mf}{\mathbf}
\newcommand{\Rmnum}[1]{\expandafter\@slowromancap\romannumeral #1@} \makeatother
\title{The direct linearization scheme with the Lam\'e function: The KP equation and reductions}
\author{Xing Li$^1$,~~ Ying-ying Sun$^2$,~~ Da-jun Zhang$^{3,4}$\footnote{
Corresponding author. Email: djzhang@staff.shu.edu.cn}\\
{\small $^1$School of Mathematics and Statistics, Jiangsu Normal University, Xuzhou 221116, China}\\
{\small $^2$Department of Mathematics, University of Shanghai for Science and Technology, Shanghai 200093, China}\\
{\small $^3$Department  of Mathematics, Shanghai University, Shanghai 200444, China}\\
{\small $^{4}$Newtouch Center for Mathematics of Shanghai University,  Shanghai 200444, China}}
\date{\today}
\begin{document}

\maketitle

\begin{abstract}
The paper starts from establishing an elliptic direct linearization (DL) scheme for
the Kadomtsev-Petviashvili equation.
The scheme consists of an integral equation (involving the Lam\'e function)
and a formula for elliptic soliton solutions,
which can be confirmed by checking Lax pair.
Based on analysis of real-valuedness of the Weierstrass functions,
we are able to construct a Marchenko equation for elliptic solitons.
A mechanism to obtain nonsingular real solutions from this elliptic DL scheme is formulated.
By utilizing elliptic $N$th roots of unity and reductions,
the elliptic DL schemes, Marchenko equations and  nonsingular real solutions
are studied for the Korteweg-de Vries equation and Boussinesq equation.
Illustrations of the obtained solutions show solitons and their interactions on a periodic background.

\vskip 6pt
\noindent
\textbf{Key Words:} direct linearization approach, Lam\'e function, Marchenko equation,
elliptic soliton solution,  Kadomtsev-Petviashvili equation.\\
\textbf{MSC:} 33E05, 35C08, 35Q51, 37K40
\end{abstract}

%

\section{Introduction}\label{sec-1}

The method of direct linearization (DL) was established by Fokas and Ablowitz in 1981 \cite{81FA}.
It is based on singular linear integral equations over arbitrary contours in the complex space of the spectral parameter.
For the Korteweg-de Vries (KdV) equation
\begin{equation}\label{kdv0}
u_t+6uu_x+u_{xxx}=0,
\end{equation}
the linear integral equation in the DL approach is \cite{81FA}
\begin{equation}\label{DL-IE}
\varphi(k;x,t)+\mathrm{i} e^{\mathrm{i}(kx+k^3 t)}\oint_L \frac{\varphi(l;x,t)}{l+k}\mathrm{d}\lambda(l)
=e^{\mathrm{i}(kx+k^3 t)},
\end{equation}
where $\mathrm{i}^2=-1$, $\mathrm{d}\lambda(l)$ and $L$ are appropriate measure and contour.
The plane wave factor (PWF) $e^{\mathrm{i}(kx+k^3 t)}$ is notable
as an element to indicate dispersion relation of the equation under investigation.
For the function $\varphi(k;x,t)$ defined above, it turns out that
$\psi(k;x,t)=\varphi(k;x,t)  e^{-\frac{\mathrm{i}}{2}(kx+k^3 t)}$ and
$u=-\frac{\partial}{\partial x}\oint_L \varphi(l;x,t)\mathrm{d}\lambda(l)$
satisfy the Lax pair of the KdV equation \eqref{kdv0} and
consequently $u$ is a solution of the KdV equation.
Not only solutions, but the Gel'fand-Levitan-Marchenko (GLM) equation in the inverse scattering transform
can be derived from the integral equation \cite{81FA}.
In the framework of \cite{81FA}, solutions of the investigated nonlinear equation are verified
through checking its Lax pair.
Such a scheme was  extended to the  Kadomtsev-Petviashvili (KP) equation,
the Benjamin-Ono equation, etc, to find their solutions, e.g.
\cite{82AF,83-AFA-BO,82FA,83-FA-KP,84SAF,82FVQC,82QNC}.

Soon after Fokas and Ablowitz's pioneer  work \cite{81FA}, the DL idea was developed by
the Dutch group in \cite{83-NQC,83NQVC,84QNCV} into a flexible machinery involving infinite matrix structures,
in which nonlinear equations can be constructed together with their solutions,
without needing to check Lax pairs.
It allows the treatment of discrete as well as continuous equations on one and the same
footing, within one formulism.
Such a DL scheme has shown effectiveness in particular in the investigation of integrable lattice equations,
e.g. \cite{85N,85-NCW,92NPCQ,18F,17FN,21FN,21YF,12ZZN}.

A quite recent progress of the DL approach is the establishment of its elliptic scheme
involving infinite matrix structure and `discrete' Lam\'e-type PWFs \cite{NSZ-2019}.
The Lam\'e function is a solution of the Lam\'e equation
 (known as the Schr\"odinger equation with an elliptic potential)
$ y''+(A+B\wp(x))y=0$
when taking $A=-\wp(k)$ and $B=-2$.
The solution reads $y=\frac{\sigma(x+k)}{\sigma(x)\sigma(k)}e^{-\zeta(k)x}$,
where $\sigma, \zeta$ (and $\wp$) are the Weierstrass functions.
For the KdV equation, there are exact solutions involving the Lam\'e function,
which
 emerged in \cite{74KM} and \cite{74Wahlqulst} 1970s.
In 2010, Nijhoff and Atkinson \cite{IMRN2010} employed a discrete analogue of the Lam\'e function
and constructed solutions for the Adler-Bobenko-Suris \cite{ABS03} (except Q4 equation) lattice equations.
The obtained solutions are presented in terms of the elliptic Cauchy matrix and
called `elliptic solitons' to distinguish from the algebro-geometric (finite-gap) solutions (cf.\cite{DN-JETP-1974}).
Later, their treatment was extended to the lattice KP equations \cite{2013-YN-JMP}.
The elliptic DL scheme established in \cite{NSZ-2019} is based on the infinite matrix structures.
The concept of `elliptic $N$th  root of unity' was introduced to define discrete dispersion relations
as well as make reductions from the lattice KP system to the lower dimensional equations,
namely, the discrete KdV and Boussinesq equations.
As for the elliptic solitons for the continuous equations, a recent progress is due to \cite{22LxZhang},
which  established a framework of bilinear approach that involves the Lam\'e function,\
including formulae of $\tau$ functions, vertex operators and bilinear identities for the KP hierarchy
and their reductions. Periodic degenerations (i.e. trigonometric/hyperbolic and rational forms)
of the elliptic matter were also discussed in \cite{22LxZhang}.
Later, the study of \cite{22LxZhang} was extended to discrete case in \cite{24LxZhang}.
More recently, the consideration of such type of elliptic soliton solutions from
Sato's approach was developed by Kakei \cite{23Kakei},
where the Darboux transformation based on pseudo-differential operators and the Lam\'e function
was established getting elliptic $N$-soliton solutions for the whole KP hierarchy
and the KdV hierarchy.
The elliptic solitons are also related to a recent work by Nakayashiki \cite{Naka-LMP-2024}.

In this paper, we will extend the Fokas-Ablowitz DL scheme to the elliptic soliton case\footnote{
In the paper by elliptic soliton we mean the solutions involving the Lam\'e function
instead of the usual exponential functions for the ordinary solitons,
which follows the naming in \cite{IMRN2010}.
Strictly speaking, ``elliptic solitons'' should be elliptic in the spatial variable $x$ (cf.\cite{K-FAA-1980}).
We will provide a brief review on such ``elliptic solitons'' in Appendix \ref{App-3} }
and formulate a mechanism to obtain non-singular real-valued elliptic soliton solutions.
First, for the KP equation, we will present an integral equation involving elliptic PWFs and Lam\'e function,
using which we can define a wave function and solution for the KP equation.
This DL scheme will be confirmed by checking the wave function and solution satisfy the KP Lax pair.
Real-valuedness of the Weierstrass functions will be discussed (see Appendix \ref{App-1}).
In light of this, we are able to construct a Marchenko equation for elliptic soliton solutions of the KP equation.
By taking special measure in the integral equation, we may obtain explicit expressions
for the elliptic solitons (as well as $\tau$ functions) of the KP equation.
Some solutions will be illustrated.
In principle, employing the notion of elliptic $N$th  root of unity introduced in \cite{NSZ-2019},
we can reduce the elliptic DL scheme, Marchenko equation and solutions  of the KP equation
to those of the KdV equation and the Boussinesq equation.
However, in practice, there are many details to be fixed, which will be elaborated in the paper.

We organize the paper as follows.
In section \ref{sec-2}, we set up the elliptic DL scheme for the KP equation,
construct its Marchenko equation  and derive formulae of the elliptic $N$-soliton solution
and $\tau$-function.
Then, in section \ref{sec-3}, we elaborate reduction details so that
the elliptic DL schemes, Marchenko equations and elliptic solitons
can be constructed for the KdV equation and Boussinesq equation.
Conclusions are given in section \ref{sec-4}.
There are three Appendix sections. The first one is devoted to the real-valuedness of the Weierstrass functions.
The second one consists of the analysis about elliptic cube roots.
The third one includes a brief review on ``elliptic solitons''.

\section{Elliptic DL scheme of the KP equation}\label{sec-2}

In this section, we will focus on the KP equation:
\begin{equation}\label{eq:kp}
u_t+u_{xxx}+6uu_x+3\partial^{-1}_x u_{yy}=0,
\end{equation}
where $\partial^{-1}_x \partial_x=\partial_x \partial^{-1}_x=1$, $\partial_x=\frac{\partial}{\partial x}$.
We will present its elliptic DL scheme and the Marchenko equation
when the equation has a periodic  background solution
\begin{equation}\label{u0}
u_0(x)=-2\wp(x).
\end{equation}
Real-valued solutions and their dynamics will also be presented.

 \subsection{Elliptic DL scheme}\label{sec-2-1}

It is well known that the KP equation \eqref{eq:kp} admits a Lax pair
\begin{subequations}\label{kp-lax}
\begin{align}
	& P\, \psi(x,y,t)  = 0,  ~~~ P\doteq \partial_y-\partial_x^2-u(x,y,t), \label{2.3a} \\
	& M\, \psi(x,y,t) = 0,  ~~~
      M\doteq\partial_t +4\partial_x^3+6u(x,y,t)\partial_x+3u_x(x,y,t)+3\partial_x^{-1}u_y(x,y,t),
\end{align}
\end{subequations}
and the KP equation arises from the compatibility $[P,M]=PM-MP=0$.
When $u(x,y,t)=u_0(x)$, the above Lax pair admits a special solution
\begin{equation}\label{psi-0}
\psi_0(x,y,t;k)= \Psi_x(k)\rho_k(y,t),
\end{equation}
where $k$ is a parameter, $\rho_0(k)$ is a function of $k$ but independent of $(x,y,t)$,
\begin{equation}
\rho_k(y,t)= \rho_0(k)\exp\Bigl(\wp(k)y-2\wp'(k)t\Bigr)\label{rho},
\end{equation}
and $\Psi_x(k)$ is known as the Lam\'e function
\begin{equation}\label{psi-phi}
 \Psi_x(k)=\frac{\sigma(x+k)}{\sigma(x)\sigma(k)}e^{-\zeta(k)x}.
\end{equation}
$\psi_0(x,y,t;k)$  can be viewed as a Lam\'e-type PWF for the
KP equation \cite{22LxZhang}.
Note that among the Weierstrass functions,
$\wp(z)$ is even and doubly periodic, i.e.,
\[\wp(z+2mw+2nw')=\wp(z),~~  m,n\in \mathbb{Z}\]
where $w$ and $w'$ are the  fundamental half-periods,
$\zeta(z)$ and $\sigma(z)$ are odd and quasi-periodic,
 in the sense that
\begin{subequations}\label{periodicity}
\begin{align}
&\zeta(z+2m w+2n w')=\zeta(z)+2m \zeta(w)+2n\zeta(w'), \\
&\sigma(z+2mw+2nw')=(-1)^{m+n+mn}\sigma(z)e^{(z+mw+nw')(2m\zeta(w)+2n\zeta(w'))}. \label{periodicity-sig}
\end{align}
\end{subequations}
When $k\neq w,w'$, $\Psi_x(k)$ and $\Psi_x(-k)$ are linearly independent.
$\Psi_x(k)$ is doubly periodic with respect to $k$:
\begin{equation}
\Psi_x(k+2mw+2nw') =\Psi_x(k). \label{periodicity-Psi}
\end{equation}

In the following we describe the elliptic DL scheme of the Fokas-Ablowitz type
for the KP equation.

\begin{theorem}\label{th-KP-1}
Consider the integral equation of  $\psi(x,y,t;k)$:
\begin{equation}\label{kp-integ}
\psi(x,y,t;k)+\rho_k(y,t)\iint_D\psi(x,y,t;l)\gamma_{l'}(y,t)\Psi_x{(k,l')}\mathrm{d}\lambda(l,l')=\Psi_x(k)\rho_k(y,t),
\end{equation}
where $D$ and $\mathrm{d}\lambda(l,l')$ are general integration domain and measure
in the space of the spectral variables $l$ and $l'$,
$\rho_k(x,y,t)$ and $\Psi_x(k)$ are defined as in \eqref{rho} and \eqref{psi-phi},
$\Psi_x{(k,l')}$ is a generalized Lam\'e function defined as
\begin{equation}\label{psi-2}
\Psi_x{(k,l')}=\frac{\sigma(x+k+l')}{\sigma(x)\sigma(k+l')}e^{-\big(\zeta(k)+\zeta(l')\big)x}
\end{equation}
and $\gamma_{l'}(y,t)$ is defined as
\begin{equation}\label{sigma}
\gamma_{l'}(y,t)=\gamma_0(l')\exp\Big(-\wp(l')y-2\wp'(l')t\Big)
\end{equation}
with $\gamma_0(l')$ being a function of $l'$.
It is assumed that the differential $\partial_\mu$ commutes with $\iint_D$ for $\mu=x, y, t$
and the homogeneous integral equation has only  zero solution.
Under these two assumptions,
solution of the KP  equation \eqref{eq:kp} is provided through
	\begin{equation}\label{u:KP}
	u(x,y,t)=-2\wp(x)-2\partial_x\iint_D \psi(x,y,t;l)\gamma_{l'}(y,t)\Psi_x(l')\mathrm{d}\lambda(l,l').
	\end{equation}
\end{theorem}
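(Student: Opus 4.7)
My plan is to verify directly that the wave function $\psi(x,y,t;k)$ determined by \eqref{kp-integ}, together with the potential $u(x,y,t)$ from \eqref{u:KP}, satisfies the Lax pair \eqref{kp-lax}; the KP equation \eqref{eq:kp} then follows from the compatibility $[P,M]=0$ applied to the rich family $\{\psi(\cdot;k)\}$. The argument is the elliptic version of the Fokas--Ablowitz mechanism \cite{81FA}: apply $P$ (and afterwards $M$) to both sides of \eqref{kp-integ}, use elliptic identities to recognize the outcome as a \emph{homogeneous} linear integral equation with the same kernel as \eqref{kp-integ}, and invoke the hypothesis that this homogeneous equation admits only the trivial solution.

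The preliminary ingredients I would assemble are: (i)~the Lam\'e equation $\partial_x^2\Psi_x(k)=(\wp(k)+2\wp(x))\Psi_x(k)$ together with $\partial_y\rho_k=\wp(k)\rho_k$ and $\partial_t\rho_k=-2\wp'(k)\rho_k$, which show that the inhomogeneous term $\Psi_x(k)\rho_k$ in \eqref{kp-integ} solves the background Lax pair $P_0\Psi=M_0\Psi=0$ with $u_0=-2\wp(x)$; (ii)~the classical three-term sigma identity
\[
\frac{\sigma(x+k)\sigma(x+l')\sigma(k+l')}{\sigma(x)\sigma(x+k+l')\sigma(k)\sigma(l')}=\zeta(x)+\zeta(k)+\zeta(l')-\zeta(x+k+l'),
\]
which yields the splitting $\partial_x\Psi_x(k,l')=-\Psi_x(k)\Psi_x(l')$, the elliptic counterpart of the partial-fraction identity driving \cite{81FA}; and (iii)~the non-trivial elliptic identity
\[
\bigl[\zeta(x+k+l')-\zeta(x)-\zeta(k)-\zeta(l')\bigr]\bigl[\zeta(x+l')-\zeta(x)+\zeta(k)-\zeta(k+l')\bigr]=\wp(x)-\wp(k),
\]
verifiable by checking that both sides are doubly-periodic elliptic functions of $x$ with matching principal parts at the lattice poles (a quick sanity check in the rational degeneration $\wp(z)=1/z^{2}$ already reproduces the identity).

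With these at hand I would apply $P=P_0+2\partial_x U$, where $U(x,y,t):=\iint_D\psi(l)\gamma_{l'}\Psi_x(l')\,d\lambda$, to both sides of \eqref{kp-integ}, commuting $P$ past $\iint_D$. By (i) the right-hand side becomes $2(\partial_x U)\Psi_x(k)\rho_k$, while Leibniz on the left produces $(P\psi)(k)+\rho_k\iint_D(P\psi)(l)\gamma_{l'}\Psi_x(k,l')\,d\lambda$ together with the residual
\[
\rho_k\iint_D\gamma_{l'}\bigl[-2\psi_x(l)\,\partial_x\Psi_x(k,l')+2(\wp(k)-\wp(x))\psi(l)\Psi_x(k,l')+2c\,\psi(l)\,\partial_x\Psi_x(k,l')\bigr]d\lambda,
\]
with $c=\zeta(k)+\zeta(l')-\zeta(k+l')$. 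Substituting (ii) into the residual and invoking (iii), the residual reassembles exactly into $2(\partial_x U)\Psi_x(k)\rho_k$, cancelling the right-hand side contribution and leaving the homogeneous equation, so the uniqueness hypothesis forces $P\psi\equiv0$. The second Lax equation $M\psi=0$ is obtained by the same three-step mechanism, but requires higher-order differential consequences of the Lam\'e equation (for $\partial_x^3\Psi_x(k,l')$, producing $\wp'$-terms) and a delicate matching of the non-local piece $3\partial_x^{-1}u_y$ in $M$ against $y$- and $x$-derivatives of $U$; this $M$-step, especially the non-local matching, is where I expect the heaviest bookkeeping and the main technical obstacle. Once $P\psi=M\psi=0$ hold on the whole family, $[P,M]=0$ as a differential operator, which is \eqref{eq:kp} for $u$.
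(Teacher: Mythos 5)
Your overall strategy coincides with the paper's: apply $P$ and $M$ to the integral equation, reduce to a homogeneous equation with the same kernel, and invoke the uniqueness hypothesis. Your $P$-half is correct and essentially complete: the residual you display is exactly what one gets after using $c^2=\wp(k)+\wp(l')+\wp(k+l')$ to reduce $\partial_x^2\Psi_x(k,l')$, your identity (ii) is the paper's \eqref{eq:indent-chi} and gives \eqref{kp:inte-cond}, and your identity (iii) is precisely the relation that converts the Wronskian form \eqref{h} of the kernel into the product form \eqref{kp-h} — I checked that the residual does collapse to $2(\partial_x U)\Psi_x(k)\rho_k$ as you claim. The presentational difference is that the paper runs the argument in the constructive direction of \cite{84SAF}: it posits an unknown kernel $h$, imposes the separation ansatz \eqref{kp-hxy}, and lets the compatibility $h_{xy}=h_{yx}$ force the adjoint equation \eqref{kp-lax-adj-x} and hence $f=-2\gamma_{l'}\Psi_x(l')$, so the kernel and the solution formula \eqref{u:KP} are \emph{derived} rather than verified. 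Your direct verification buys nothing less, but the paper's route explains where $\Psi_x(k,l')$ and $\gamma_{l'}$ come from.

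The genuine gap is the $M$-step, which you only announce and explicitly defer as ``the main technical obstacle.'' Since $P\psi=0$ alone does not yield the KP equation, the proof is incomplete without it. The paper's resolution of exactly the difficulty you flag is worth internalizing: because the kernel admits the Wronskian representation \eqref{h}, $h=\frac{1}{2(\wp(k)-\wp(l'))}\left(f\psi_{0,x}-f_x\psi_0\right)$, its $t$-derivative collapses to the closed formula \eqref{ht}, $h_t=2\psi_{0,x}f_x-2(\wp(k)+\wp(l')+\wp(x))\psi_0 f$, with no third-order $x$-derivatives surviving. Substituting this together with \eqref{u:KP} into $M$ applied to \eqref{kp-integ} makes the inhomogeneous terms — including the nonlocal $3\partial_x^{-1}u_y$ piece, which pairs off against $3(u+2\wp(x))_x\psi_0+6(u+2\wp(x))\psi_{0,x}$ — cancel identically, leaving the homogeneous equation for $M\psi$. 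Without carrying out this cancellation (or an equivalent one), your argument establishes only half of the Lax pair and therefore does not yet prove the theorem.
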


\begin{proof}
Thanks to the generalization of direct linearization in \cite{84SAF}, we  apply it to the elliptic case.
We aim to prove that $\psi(x,y,t;k)$ defined by \eqref{kp-integ}
together with $u(x,y,t)$ defined by \eqref{u:KP} satisfy the KP Lax pair \eqref{kp-lax},
i.e. $P\psi=0$ and $M\psi=0$, thereby confirming the theorem.
To achieve that, we consider the following general integral equation
	\begin{equation}\label{eq:integ-kpa}
	\psi(x,y,t;k)+\iint_D\psi(x,y,t;l)h(x,y,t;k,l')\mathrm{d}\lambda(l,l')=\psi_0(x,y,t;k),
	\end{equation}
where $\psi_0(x,y,t;k) $ is the wave function given by  \eqref{psi-0}
and $h(x,y,t;k,l')$  is a function to be determined.
We omit the  independent variables and  spectral parameters in case  where it does not lead to confusion.
Next, we look for $h(x,y,t;k,l')$ such that \eqref{2.3a} holds for $\psi(x,y,t;k)$ defined by \eqref{eq:integ-kpa}.
Applying  the operator $P$  on the above integral equation, it turns out that we can have
\begin{align}\label{P-psi}
P\psi(x,y,t;k)+\iint_D (P\psi(x,y,t;l) )h(x,y,t,k, l') \mathrm{d}\lambda(l,l')=0
\end{align}
if we require
\begin{equation}\label{eq:kp-u-a}
	(u+2\wp(x))\psi_0=\iint_D\Big(2\psi_x h_x-\psi\big(h_y-h_{xx}\big)\Big)\mathrm{d}\lambda(l,l').
\end{equation}
Assume that there exist $f(x,y,t;l')$ and $g(x,y,t;l')$ such that the derivatives of $h(x,y,t,k, l')$
admit the following variable separation forms:
\begin{subequations}\label{kp-hxy}
\begin{align}
& 2 h_x(x,y,t;k,l')=f(x,y,t;l')\psi_{0}(x,y,t;k), \label{kp-hx}\\
& h_{y}(x,y,t;k,l')-h_{xx}(x,y,t;k,l')=-g(x,y,t;l')\psi_{0}(x,y,t;k), \label{kp-hy}
\end{align}
\end{subequations}
which leads \eqref{eq:kp-u-a} to
\begin{equation}\label{2.16}
u(x,y,t)=-2\wp(x)+\iint_D\big(\psi_x(x,y,t;l) f(x,y,t;l')+\psi(x,y,t;l) g(x,y,t;l')\big)\mathrm{d}\lambda(l,l').
\end{equation}
The compatibility $h_{xy}=h_{yx}$ requires that $g=f_x$
and
\begin{equation}\label{kp-lax-adj-x}
f_y(x,y,t;l')=-f_{xx}(x,y,t;l')+2\wp(x)f(x,y,t;l').
\end{equation}
Thus we write \eqref{kp-hy} as
\begin{equation}
2 h_{y}(x,y,t;k,l')=f(x,y,t;l')\psi_{0,x}(x,y,t;k)-f_x(x,y,t;l')\psi_{0}(x,y,t;k). \label{kp-hyb}
\end{equation}
Noticing that \eqref{kp-lax-adj-x} is nothing but an adjoint form of the
linear problem \eqref{2.3a} with $u(x,y,t)=u_0(x)=-2\wp(x)$,
we take (cf.\eqref{psi-0})
\begin{equation}\label{f}
f(x,y,t;l')=-2\gamma_{l'}(y,t)\Psi_x(l')
\end{equation}
where $\gamma_{l'}(y,t)$ is defined as \eqref{sigma}.
Thus, $h(x,y,t;k,l')$ is well defined via \eqref{kp-hx} and \eqref{kp-hyb} with $\psi_{0}(x,y,t;k)$
and the above $f(x,y,t;l')$.
It is not difficult to get $h(x,y,t;k,l')$ by integrating \eqref{kp-hx} and \eqref{kp-hyb}.
Thus, we have
\begin{equation}\label{h}
h(x,y,t;k,l')
=\frac{1}{2(\wp(k)-\wp(l'))}\left(f(x,y,t;l')\psi_{0,x}(x,y,t;k)-f_x(x,y,t;l')\psi_0(x,y,t;k)\right),
\end{equation}
which can be written as
\begin{equation}\label{kp-h}
h(x,y,t;k,l')=\gamma_{l'}(y,t)\rho_k(y,t)\Psi_x(k, l'),
\end{equation}
where the following formulas have been used:
\begin{subequations}\label{Weieratrass-1}
\begin{align}
&\wp(z)-\wp(u)=-\frac{\sigma(z+u)\sigma(z-u)}{\sigma^2(z)\sigma^2(u)},\\
&\zeta(u)+\zeta(v)+\zeta(z)-\zeta(u+v+z)
=\frac{\sigma(u+v)\sigma(u+z)\sigma(z+v)}{\sigma(u)\sigma(v)\sigma(z)\sigma(z+u+v)}.
\label{eq:indent-chi}
\end{align}
\end{subequations}
Substituting \eqref{kp-h} into \eqref{eq:integ-kpa} and substituting \eqref{f} together with $g=f_x$
into \eqref{2.16}, we can recover the integral equation \eqref{kp-integ} and the solution \eqref{u:KP}, respectively.
Meanwhile, with \eqref{kp-h}, the homogeneous integral equation \eqref{P-psi}
has   only zero solution $P \psi(x,y,t;k)=0$.

Now, what remains is to show $M \psi(x,y,t;k)=0$ in light of the integral equation \eqref{kp-integ}
and the solution \eqref{u:KP}.
First, note that the expression \eqref{h} of $h(x,y,t;k,l')$ gives
\begin{equation}\label{ht}
h_t(x,y,t;k,l')=2\psi_{0,x}(x,y,t;k)f_{x}(x,y,t;l')-2\Big(\wp(k)+\wp(l')+\wp(x)\Big)\psi_0(x,y,t;k)f(x,y,t;l').
\end{equation}
Applying the operator $M$ on  \eqref{kp-integ} yields
\begin{equation*}
\begin{split}
&M\psi+\iint_D(M\psi)h\mathrm{d}\lambda(l,l')
+\iint_D\psi h_t\mathrm{d}\lambda(l,l')\\
& +12\iint_D(\psi_xh_{xx}+\psi_{xx} h_x)\mathrm{d}\lambda(l,l')
 + \iint_D (4\psi h_{xxx}+6u\psi h_x)\mathrm{d}\lambda(l,l')\\
=~& 3\Big((u+2\wp(x))_x+\partial^{-1}u_y\Big)\psi_0+6(u+2\wp(x))\psi_{0,x}.
\end{split}
\end{equation*}
By inserting $h_t(x,y,t;k,l')$ \eqref{ht} and $u(x,y,t)$ \eqref{u:KP},
it reduces to a homogeneous integral equation
\begin{align*}
M\psi(x,y,t;k)+\rho_k(y,t)\iint_D (M\psi(x,y,t;l) ) \gamma_{l'}(y,t)\Psi_x(k, l') \mathrm{d}\lambda(l,l')=0,
\end{align*}
which implies that $M\psi(x,y,t;k)=0$
in light of the uniqueness assumption of solutions of the integral equation.

Thus, we complete the proof.

\end{proof}

%

\subsection{Connection with the Marchenko equation}\label{sec-2-2}

From the linear integral equation \eqref{kp-integ}, we can derive the Marchenko integral equation
for elliptic solitons (cf.\cite{74ZS,77MZBIM} for the usual soliton case), which involves integration on the real axis.
Before we proceed, we first refer the reader to Appendix \ref{App-1} of the paper for
the real-valuedness of the Weierstrass functions.

In the following we consider the case where the moduli $g_2$ and $g_3$ are real and
the discriminant $\Delta$ is positive. In this case, according to Proposition \ref{prop-A.2},
the half-periods $w$ and  $w'$ are real and  purely imaginary, respectively.
We also suppose $\mathrm{Im} w'>0$ without loss of generality.

Note that  $\sigma(z)$  takes zero when $z=2mw+2nw'$, $m,n\in\mathbb{Z}$,
and even when $z\in \mathbb{R}$  there are periodic zeros when $z=2mw$ for $m\in\mathbb{Z}$,
which will bring singularities to elliptic solitons.
In order to get real-valued non-singular solutions,
we shift the zeros from the real axis to the  complex plane by performing a transformation $x\mapsto x+w'$.
Thus we introduce
\begin{subequations}\label{2.26}
 \begin{align}
\widetilde{ \Psi}_x(k)&\doteq \frac{\sigma(x+k+w')}{\sigma(x+w')\sigma(k)}e^{-\zeta(k)x-\zeta(w')k},
\label{Ps-k}\\
\widetilde{ \Psi}_x(k,l)&\doteq \frac{\sigma(x+k+l+w')}{\sigma(x+w')\sigma(k+l)}
e^{-(\zeta(k)+\zeta(l))x-\zeta(w')(k+l)}.
\label{Ps-kl}
\end{align}
\end{subequations}

\begin{Proposition}\label{prop-2.1}
$\widetilde{ \Psi}_x(k)$ and $\widetilde{ \Psi}_x(k,l)$ are real-valued functions
when $x, k, l, g_2, g_3\in \mathbb{R}$, and $\Delta>0$.
\end{Proposition}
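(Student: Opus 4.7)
The plan is to verify real-valuedness by checking that each object is invariant under complex conjugation. The main ingredients are: (i) the Weierstrass functions $\wp, \zeta, \sigma$ have Laurent/Taylor expansions with coefficients that are polynomials in $g_2, g_3$, hence when $g_2, g_3\in\mathbb{R}$ one has $\overline{\sigma(z)}=\sigma(\bar z)$ and similarly for $\zeta$ and $\wp$; (ii) from Proposition \ref{prop-A.2} referenced in Appendix \ref{App-1}, when $\Delta>0$ and $g_2,g_3$ are real, $w\in\mathbb{R}$ and $w'\in \mathrm{i}\mathbb{R}$; and (iii) the quasi-periodicity \eqref{periodicity-sig} together with oddness of $\zeta$.

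First I would record a short lemma: $\zeta(w')$ is purely imaginary. Indeed, using $\bar{w'}=-w'$ and that $\zeta$ is odd,
\[\overline{\zeta(w')}=\zeta(\bar{w'})=\zeta(-w')=-\zeta(w'),\]
so $\zeta(w')\in \mathrm{i}\mathbb{R}$. Next, I would translate the half-period shift into a usable identity by applying \eqref{periodicity-sig} with $(m,n)=(0,-1)$ to obtain
\[\sigma(z-2w')=-\sigma(z)\,e^{-2\zeta(w')(z-w')},\]
which will let me connect $\sigma(\,\cdot-w')$ with $\sigma(\,\cdot+w')$ after conjugation.

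Now for $\widetilde{\Psi}_x(k)$ with $x,k\in\mathbb{R}$, $\sigma(k)$ and $\zeta(k)$ are real, so conjugating \eqref{Ps-k} yields
\[\overline{\widetilde{\Psi}_x(k)}=\frac{\sigma(x+k-w')}{\sigma(x-w')\,\sigma(k)}\,e^{-\zeta(k)x+\zeta(w')k}.\]
Using the shift identity above with $z=x+k+w'$ and $z=x+w'$, I get
\[\sigma(x+k-w')=-\sigma(x+k+w')\,e^{-2\zeta(w')(x+k)},\qquad \sigma(x-w')=-\sigma(x+w')\,e^{-2\zeta(w')x}.\]
Substituting these into the conjugated expression, the two minus signs cancel and the exponential factors combine to give precisely $e^{-\zeta(k)x-\zeta(w')k}$, so $\overline{\widetilde{\Psi}_x(k)}=\widetilde{\Psi}_x(k)$. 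Then I would repeat the same calculation for $\widetilde{\Psi}_x(k,l)$ in \eqref{Ps-kl}: the structure is identical with $k+l$ playing the role of $k$ (and $k+l\in\mathbb{R}$), so the same shift identities and exponent bookkeeping give $\overline{\widetilde{\Psi}_x(k,l)}=\widetilde{\Psi}_x(k,l)$.

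The only subtle bookkeeping step, and thus the main place to be careful, is tracking the exponential factors produced by the quasi-periodicity of $\sigma$ and making sure they cancel exactly against the conjugated $e^{-\zeta(w')k}$ (respectively $e^{-\zeta(w')(k+l)}$) term. Everything else is a direct application of the reality conventions collected in Appendix \ref{App-1}.
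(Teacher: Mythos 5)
Your proof is correct and follows exactly the route the paper intends: the Schwarz reflection property $\overline{f(z)}=f(\bar z)$ from Proposition \ref{prop-A.1}, the fact that $w'$ is purely imaginary (hence $\zeta(w')\in\mathrm{i}\mathbb{R}$) from Proposition \ref{prop-A.2}, and the quasi-periodicity \eqref{periodicity-sig} to return $\sigma(\cdot-w')$ to $\sigma(\cdot+w')$ with exactly the compensating exponential. The paper states this only as a one-line sketch; your version supplies the bookkeeping, and the cancellations check out.
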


\begin{proof}
Using Proposition \ref{prop-A.1}, \ref{prop-A.2} and the trick used in the proof Proposition \ref{prop-A.1},
together with the quasi-periodicity \eqref{periodicity-sig} of $\sigma(z)$,
one can prove the proposition.

\end{proof}

The function $\widetilde{ \Psi}_x(k)$ has been well studied in Chapter 5 of \cite{book-Pastras}.
For the function $\widetilde{ \Psi}_x(k,l)$, we rewrite it in the form
\begin{subequations}
\begin{equation}
\widetilde{ \Psi}_x(k,l) =v(x;k,l)e^{\frac{p(k)+p(l)}{w}x}
\end{equation}
where
\begin{align}
& v(x;a,b)=\frac{\sigma(x+w'+a+b)}{\sigma(x+w')\sigma(a+b)}
e^{-\zeta(w')(a+b)-\frac{(a+b)\zeta(w)}{w}x},\\
& p(a)=a\zeta(w)-w\zeta(a).
\end{align}
\end{subequations}
Then we have the following.

\begin{Proposition}\label{prop-2.2}
The function $\widetilde{ \Psi}_x(k,l)$ in the above form has the following properties.\\
(1). (Quasi-)periodicity:
\begin{subequations}
 \begin{align}
 &\widetilde{ \Psi}_x(k+2 w,l)=-\widetilde{ \Psi}_x(k,l), \label{3.28a} \\
 & \widetilde{ \Psi}_x(k+2 w',l)=\widetilde{ \Psi}_x(k,l). \label{3.28b}
 \end{align}
\end{subequations}
(2). For function $v(x; k,l)$ we have
\begin{equation}\label{2.29}
 v(x+2w;k,l)=v(x;k,l).
\end{equation}
(3). For  function $p(k)$ we have
\begin{align}
 p(k+2w)=p(k), \quad p(k+2w')=p(k)+\pi \mathrm{i}, \quad p(-k)=-p(k).
\end{align}
 When $k\in(0, w']$ and $(w,w+w']$, $p(k)$ is  purely imaginary;
 when $k\in(w', w+w')$,
 $p(k)$ is  complex, and the imaginary part is always $\frac{\pi}{2}$; when $k\in(0, w]$,
 $p(k)$ is  real, it varies monotonically from $-\infty$ at the origin to $0$ at $w$.

Note that in the above $(c,d)$ (or $[c,d]$ etc.) for $c,d\in \mathbb{C}$ means the segment
on the complex plane from point $c$ to $d$, rather than an interval on the real axis.
\end{Proposition}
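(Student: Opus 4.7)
The plan is to dispatch parts (2) and the algebraic statements in (3) by direct computation from the quasi-periodicity formulas \eqref{periodicity}, and then leverage these to obtain (1); the finer real/imaginary analysis in (3) will be handled separately at the end using Appendix \ref{App-1}. Throughout I will write $\eta=\zeta(w)$, $\eta'=\zeta(w')$ and use the Legendre relation $\eta w'-\eta'w=\pi\mathrm{i}/2$, which is the only nontrivial identity needed beyond \eqref{periodicity}.

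First I would verify (2) by substituting $x\mapsto x+2w$ into the definition of $v(x;k,l)$. Applying \eqref{periodicity-sig} with $(m,n)=(1,0)$ to both $\sigma(x+w'+k+l+2w)$ and $\sigma(x+w'+2w)$ produces identical factors $-\sigma(\cdot)e^{2\eta(\cdot+w)}$ in the numerator and denominator; what remains of the $\sigma$-quotient is $e^{2\eta(k+l)}$, and this is exactly cancelled by the shift in the exponential prefactor $e^{-\frac{(k+l)\eta}{w}(x+2w)}$. For part (3), $p(k+2w)=p(k)$ follows immediately from $\zeta(k+2w)=\zeta(k)+2\eta$, while $p(k+2w')=p(k)+2(w'\eta-w\eta')=p(k)+\pi\mathrm{i}$ is the Legendre relation, and $p(-k)=-p(k)$ is just the oddness of $\zeta$. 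With these in hand, part (1) reduces to checking how $v(x;k,l)$ transforms. The computation for $v(x;k+2w,l)$ proceeds exactly as in (2) but now the uncancelled $\sigma$-factor produces $e^{2\eta(x+w')-2w\eta'-2\eta x}=e^{2(\eta w'-w\eta')}=e^{\pi\mathrm{i}}=-1$; combined with $p(k+2w)=p(k)$, this gives \eqref{3.28a}. For \eqref{3.28b}, the analogous computation using \eqref{periodicity-sig} with $(m,n)=(0,1)$ yields $v(x;k+2w',l)=v(x;k,l)\,e^{-\pi\mathrm{i} x/w}$, which precisely cancels the extra factor $e^{\pi\mathrm{i} x/w}$ coming from $p(k+2w')-p(k)=\pi\mathrm{i}$ in the exponential prefactor.

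For the qualitative analysis of $p(k)$ on the four segments, the strategy is to combine the values at corner points with the fact that $p'(k)=\eta+w\wp(k)$ (using $\zeta'=-\wp$) is easily controlled on each half-period axis by Appendix \ref{App-1}. The corner values are $p(w)=0$, $p(w')=w'\eta-w\eta'=\pi\mathrm{i}/2$, and, using the well-known identity $\zeta(w+w')=\zeta(w)+\zeta(w')$ at half-periods (derivable from $\wp'(w)=\wp'(w')=0$ and the addition formula), $p(w+w')=\pi\mathrm{i}/2$. On the segment $k\in(0,w]$, $\wp(k)$ is real and bounded below by $e_1>0$ while $\eta>0$ (both from real-valuedness on the real axis plus the standard identity for rectangular lattices), so $p'(k)>0$; together with the asymptotics $\zeta(k)\sim 1/k$ this gives strict monotone growth of the real quantity $p(k)$ from $-\infty$ to $0$. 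On the remaining three segments I would parametrise by the real arc-length parameter and integrate $p'$: since $\wp$ is real on each of the four half-period axes (Appendix \ref{App-1}), the integrand $\eta+w\wp$ is either purely real or purely imaginary along the segment, which forces the deviation of $p$ from its corner value to lie in a fixed coordinate axis. This immediately yields ``purely imaginary'' on $(0,w']$ and $(w,w+w']$ and ``constant imaginary part $\pi/2$'' on $(w',w+w')$.

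The main obstacle is bookkeeping rather than insight: one must be scrupulous about the $(-1)^{m+n+mn}$ sign in \eqref{periodicity-sig} and about which combinations of $\eta,\eta',w,w'$ telescope through Legendre, and the segment analysis requires invoking the four distinct real-valuedness statements of Appendix \ref{App-1} rather than a single symmetry. No new analytic ingredient is needed beyond what is already assembled in the paper.
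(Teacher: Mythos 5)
Your parts (1) and (2) are correct and are, in substance, exactly the paper's proof: (2) by direct substitution using the quasi-periodicity \eqref{periodicity-sig}, and (1) by combining that computation with the Legendre relation $w'\zeta(w)-w\zeta(w')=\tfrac{\pi\mathrm{i}}{2}$ and the invariance/shift of $p$ under $k\mapsto k+2w$ and $k\mapsto k+2w'$; your bookkeeping of the signs and exponential factors checks out. For part (3) the paper simply cites pages 61--62 of \cite{book-Pastras}, whereas you supply the standard self-contained argument (corner values $p(w)=0$, $p(w')=p(w+w')=\tfrac{\pi\mathrm{i}}{2}$, and the observation that $\mathrm{d}p=p'(k)\,\mathrm{d}k$ is purely real or purely imaginary along each half-period segment because $\wp$ is real there while $\mathrm{d}k$ is real or purely imaginary according to the segment's direction). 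That is a legitimate and slightly more informative route than the citation.

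One step deserves a warning. To get monotonicity on $(0,w]$ you assert $p'(k)=\zeta(w)+w\wp(k)>0$ from ``$\zeta(w)>0$ and $\wp(k)\ge e_1>0$.'' Proposition \ref{prop-A.2} only guarantees $\zeta(w_1)>0$ for $g_3\ge 0$; for $\Delta>0$ with $g_3<0$ it records only $\zeta(\widetilde w_1)\in\mathbb{R}$, and in fact $\zeta(w)$ does become negative on part of that parameter range (it equals $\tfrac{\pi^2}{12w}E_2(w'/w)$, and $E_2(\mathrm{i}t)\to-\infty$ as $t\to0^+$, i.e.\ for thin rectangles). So as written this step can fail even though the conclusion is true. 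The clean repair is to note $p''(k)=w\wp'(k)<0$ on $(0,w)$, so $p'$ is decreasing there and its positivity reduces to the single inequality $p'(w)=\zeta(w)+we_1>0$, which is precisely what the cited pages of \cite{book-Pastras} establish (and which holds for all real $g_2,g_3$ with $\Delta>0$). Either insert that reduction or, as the paper does, defer this one inequality to the reference; the remainder of your argument for (3) then stands.
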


\begin{proof}
(1). The proof of \eqref{3.28a} employs  formula $w'\zeta(w)-w\zeta(w')=\frac{\pi \mathrm{i} }{2}$
and the quasi-periodicity properties \eqref{periodicity}.
The proof of \eqref{3.28b} is straight forward.\\
(2). \eqref{2.29} can be verified directly.\\
(3). The proof can be found on pages 61 and 62 of \cite{book-Pastras}.

\end{proof}

Next, we construct the Marchenko equation for elliptic solitons.
Considering the KP equation \eqref{eq:kp} allows a shift symmetry $u(x)=u(x+c)$
and the parameters $\rho_0(k)$ in \eqref{rho} and $\gamma_0(l')$ in \eqref{sigma} are arbitrary,
we can immediately convert the DL scheme in Theorem \ref{th-KP-1} to the following form.

\begin{theorem}\label{th-KP-1'}
Assume $\rho_k(y,t)$, $\gamma_{l'}(y,t)$, $\widetilde{ \Psi}_x(k)$ and $\widetilde{ \Psi}_x(k,l)$ are defined as
in \eqref{rho}, \eqref{sigma} \eqref{Ps-k} and \eqref{Ps-kl}.
For $\psi(x,y,t;k)$ defined by the integral equation
\begin{equation}\label{integ-KP-move}
\psi(x,y,t;k)+\rho_k(y,t) \iint_D   \psi(x,y,t;l)\gamma_{l'}(y,t)\widetilde{ \Psi}_{x}{(k,l')}\mathrm{d}\lambda(l,l')
=\widetilde{ \Psi}_{x}(k)\rho_k(y,t),
\end{equation}
the KP equation \eqref{eq:kp} allows a solution
\begin{equation}
u(x,y,t)=-2\wp(x+w')-2\partial_x\iint_D \psi(x,y,t;l)\gamma_{l'}(y,t)
\widetilde{ \Psi}_{x}(l')\mathrm{d}\lambda(l,l').\label{u:KP-1}
\end{equation}
\end{theorem}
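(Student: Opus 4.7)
The plan is to obtain this theorem as an immediate consequence of Theorem \ref{th-KP-1} by exploiting two pieces of freedom built into that scheme: (i) the translational invariance of the KP equation \eqref{eq:kp} in the variable $x$, and (ii) the arbitrariness of the spectral functions $\rho_0(k)$, $\gamma_0(l')$ and of the measure $\mathrm{d}\lambda(l,l')$. Specifically, if $u(x,y,t)$ solves the KP equation, so does $\tilde u(x,y,t):=u(x+w',y,t)$; I would therefore apply the shift $x\mapsto x+w'$ throughout the statement of Theorem \ref{th-KP-1}, so that the background $-2\wp(x)$ becomes $-2\wp(x+w')$ (which is precisely the leading term in \eqref{u:KP-1}) and the zeros of $\sigma(\,\cdot\,)$ on the real axis get displaced into the complex plane.

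The core computation is then to re-express the shifted kernels in terms of the tilded ones. Substituting directly into \eqref{psi-phi} and \eqref{psi-2} and comparing with \eqref{Ps-k}--\eqref{Ps-kl}, I expect
\[
\Psi_{x+w'}(k)=\widetilde{\Psi}_x(k)\,\phi(k),\qquad
\Psi_{x+w'}(k,l')=\widetilde{\Psi}_x(k,l')\,\phi(k)\,\phi(l'),
\]
with a single-variable factor of the form $\phi(a)=\exp\!\bigl(\zeta(w')\,a-\zeta(a)\,w'\bigr)$, depending only on the spectral variable and not on $(x,y,t)$. This separation into a product of one-variable factors is what will make the subsequent absorption step work cleanly.

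Next I would shift $x\mapsto x+w'$ in \eqref{kp-integ} and \eqref{u:KP}, introduce the rescaled wave function $\widehat\psi(x,y,t;k):=\psi(x+w',y,t;k)/\phi(k)$, and divide the integral equation by $\phi(k)$. After these moves, $\phi(l)\phi(l')$ appears inside \emph{both} the integral equation and the solution formula with the same measure $\mathrm{d}\lambda(l,l')$. Since the measure is arbitrary — equivalently, since $\phi(l)$ can be absorbed into $\rho_0(l)$ and $\phi(l')$ into $\gamma_0(l')$ — I can replace $\phi(l)\phi(l')\,\mathrm{d}\lambda(l,l')$ by a new arbitrary measure, also denoted $\mathrm{d}\lambda(l,l')$. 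This yields exactly \eqref{integ-KP-move} for $\widehat\psi$ and exactly \eqref{u:KP-1} for $\tilde u(x,y,t)$, which solves the KP equation by translation invariance.

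The only subtlety — and what I would flag as the main thing to verify — is that the factor absorption is carried out consistently between the integral equation and the solution formula: the same pair $(l,l')$ appears with the same factors $\phi(l)\phi(l')$ in both places, so a single redefinition of the measure does the job. The two hypotheses of Theorem \ref{th-KP-1} (commutation of differentiation with $\iint_D$ and triviality of the homogeneous equation) transfer without change, as they depend on the structure of the integral operator rather than on the specific choice of $\rho_0$, $\gamma_0$, or the measure.
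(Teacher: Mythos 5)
Your proposal is correct and follows essentially the same route as the paper, which obtains Theorem \ref{th-KP-1'} from Theorem \ref{th-KP-1} in one sentence by invoking the shift symmetry $x\mapsto x+w'$ of the KP equation together with the arbitrariness of $\rho_0(k)$ and $\gamma_0(l')$. Your explicit factorization $\Psi_{x+w'}(k)=\widetilde{\Psi}_x(k)\,e^{\zeta(w')k-\zeta(k)w'}$ and $\Psi_{x+w'}(k,l')=\widetilde{\Psi}_x(k,l')\,\phi(k)\phi(l')$, with the factors absorbed consistently into the measure (equivalently into $\rho_0$ and $\gamma_0$), is exactly the computation the paper leaves implicit.
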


The integral equation \eqref{integ-KP-move} is ready for deriving the Marchenko equation.
We multiply it by $\widetilde{ \Psi}_{s}(k') \gamma_{k'}(y,t)$
and integrate each term on the domain $D$ in the space of variables $k$  and $k'$.
For the first term on the left-hand side and the term on the right-hand-side, we denote them as
\begin{subequations}
\begin{align}
K(x,s,y,t)&=-\iint_{D}\psi(x,y,t;l)\widetilde{ \Psi}_{s}(l')\gamma_{l'}(y,t)\mathrm{d}\lambda(l,l'),\\
F(x,s,y,t)&=\iint_{D}\widetilde{ \Psi}_{x}(k)\rho_k(y,t)\widetilde{ \Psi}_{s}(k')\gamma_{k'}(y,t)
\mathrm{d}\lambda(k,k').
\end{align}
\end{subequations}
The remaining term reads
\begin{equation}\label{2.34}
\iint_{D}\mathrm{d}\lambda(k,k')   \iint_{D}  \mathrm{d}\lambda(l,l') \widetilde{\Psi}_{s}(k')
\gamma_{k'}(y,t)\psi(x,y,t;l)\rho_k(y,t)\widetilde{ \Psi}_{x}(k,l')\gamma_{l'}(y,t).
\end{equation}
Note that by making use of the formula \eqref{eq:indent-chi},  we have a relation
\begin{equation}\label{kp:inte-cond}
\frac{\mathrm{d}}{\mathrm{d}x}\widetilde{ \Psi}_x(k,l')=-\widetilde{ \Psi}_{x}(k)\widetilde{ \Psi}_{x}(l').
\end{equation}
In what follows, we restrict ourselves to the case where $x, g_2, g_3\in \mathbb{R}$, $w>0$,
$\mathrm{Im} w' >0$ and $\Delta >0$.
Particularly, we take $k,l'\in (0,w)$.
It then follows from item (2) and (3) of Proposition \ref{prop-2.2} that
$v(x;k,l')$ is bounded on the real axis, and  $p(k)+p(l')<0$,
and consequently,  $\widetilde{ \Psi}_x(k,l')$  exponentially decreases to zero
as $x\to +\infty$.
Thus, from \eqref{kp:inte-cond} we have
\begin{equation}\label{eq:kp-GLM-condit}
\int_x^{+\infty}\widetilde{ \Psi}_{\xi}(k)\widetilde{ \Psi}_{\xi}(l')\mathrm{d}\xi
=\widetilde{ \Psi}_x(k,l'),
\end{equation}
which leads \eqref{2.34} to a form
\begin{equation}
-\int_x^{+\infty}K(x,\xi,y,t)F(\xi,s,y,t)\mathrm{d}\xi.
\end{equation}
Finally, we can conclude the following.

\begin{theorem}\label{th-KP-2}
There exist $K(x,s,y,t)$ and $F(x,s,y,t)$ satisfying the  Marchenko equation
\begin{equation}\label{eq:GLM-KP}
K(x,s,y,t)+F(x,s,y,t)+\int_x^{+\infty}K(x,\xi,y,t)F(\xi,s,y,t)\mathrm{d}\xi=0,
\end{equation}
and  the solution of the KP equation \eqref{eq:kp} is given by
\begin{equation}
u(x,y,t)=-2\wp(x+w')+2\partial_x K(x,x,y,t).
\end{equation}
\end{theorem}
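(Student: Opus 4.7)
The plan is to follow the route essentially sketched in the paragraph preceding the theorem: multiply the integral equation \eqref{integ-KP-move} through by $\widetilde{\Psi}_s(k')\gamma_{k'}(y,t)$, integrate against $\mathrm{d}\lambda(k,k')$ over $D$, and then recognise the resulting identity as the Marchenko equation for the unknowns $K$ and $F$. The first step is purely bookkeeping: after this multiplication and integration the first term on the left-hand side collapses to $-K(x,s,y,t)$ by the very definition of $K$, the right-hand side collapses to $F(x,s,y,t)$, and the middle term is exactly the iterated integral \eqref{2.34}. The substantive content of the theorem therefore lies in rewriting \eqref{2.34} in Marchenko form.

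For this, I would invoke the key identity \eqref{kp:inte-cond}, obtained from \eqref{eq:indent-chi}, and integrate it from $x$ to $+\infty$ to obtain \eqref{eq:kp-GLM-condit}. This step is legitimate only under the integrability of $\widetilde{\Psi}_\xi(k,l')$ at $+\infty$, which is why I would restrict the analysis to the regime $x,g_2,g_3\in\mathbb{R}$, $w>0$, $\mathrm{Im}\,w'>0$, $\Delta>0$, with $k,l'\in(0,w)$. Under these restrictions items (2) and (3) of Proposition \ref{prop-2.2} guarantee that $v(\xi;k,l')$ is bounded along the real $\xi$-axis while $p(k)+p(l')<0$, so that the factor $e^{(p(k)+p(l'))\xi/w}$ supplies exponential decay and the improper integral converges.

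Next, substituting \eqref{eq:kp-GLM-condit} into \eqref{2.34} and exchanging the outer $\iint_D\mathrm{d}\lambda(k,k')\iint_D\mathrm{d}\lambda(l,l')$ with $\int_x^{+\infty}\mathrm{d}\xi$ by Fubini (again relying on the decay just established), the $\xi$-dependent factors separate into two independent $(k,k')$- and $(l,l')$-integrals. The factor $\widetilde{\Psi}_\xi(l')$ assembles with $\psi(x,y,t;l)\gamma_{l'}(y,t)$ into $-K(x,\xi,y,t)$, while $\widetilde{\Psi}_\xi(k)$ assembles with $\rho_k(y,t)\widetilde{\Psi}_s(k')\gamma_{k'}(y,t)$ into $F(\xi,s,y,t)$. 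Hence \eqref{2.34} equals $-\int_x^{+\infty}K(x,\xi,y,t)F(\xi,s,y,t)\,\mathrm{d}\xi$, and combining the three pieces gives precisely \eqref{eq:GLM-KP}. The solution formula is then immediate: specialising $s=x$ in the definition of $K$ yields $K(x,x,y,t)=-\iint_D\psi(x,y,t;l)\gamma_{l'}(y,t)\widetilde{\Psi}_x(l')\,\mathrm{d}\lambda(l,l')$, and inserting this into \eqref{u:KP-1} produces $u(x,y,t)=-2\wp(x+w')+2\partial_x K(x,x,y,t)$.

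The main obstacle, as far as I can see, is the analytic justification of \eqref{eq:kp-GLM-condit} and of the Fubini interchange; everything else is algebraic manipulation that has been carefully prepared by the shift $x\mapsto x+w'$ leading to the real-valued, non-singular kernels $\widetilde{\Psi}_x(k)$ and $\widetilde{\Psi}_x(k,l)$ of \eqref{2.26}. Once the spectral parameters are constrained to $(0,w)$ the decay properties of $p(k)$ from Proposition \ref{prop-2.2}(3) remove all analytic difficulties, and the theorem follows.
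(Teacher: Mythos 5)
Your proposal is correct and follows essentially the same route as the paper: multiplying \eqref{integ-KP-move} by $\widetilde{\Psi}_s(k')\gamma_{k'}(y,t)$, integrating over $D$, and converting the double-integral kernel term via \eqref{kp:inte-cond} and \eqref{eq:kp-GLM-condit} (justified by the decay of $\widetilde{\Psi}_\xi(k,l')$ for $k,l'\in(0,w)$ from Proposition \ref{prop-2.2}) into $-\int_x^{+\infty}K(x,\xi,y,t)F(\xi,s,y,t)\mathrm{d}\xi$, with the signs of $K$ and $F$ handled consistently with the paper's definitions. The solution formula follows exactly as you state by setting $s=x$ in the definition of $K$ and comparing with \eqref{u:KP-1}.
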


\subsection{ Elliptic soliton solutions}\label{sec-2-3}

In the following, we construct elliptic soliton  solutions of the KP equation from the integral equation \eqref{kp-integ}
in Theorem \ref{th-KP-1}.
We impose the following condition on the measure (cf.\cite{18-FuweiNijhoff,NSZ-2019}):
\begin{equation*}
\mathrm{d}\lambda(l,l')=\sum_{i=1}^N\sum_{j=1}^{N'}B_{j,i}\delta(l-k_i)\delta(l'-k'_j)\mathrm{d}l~\mathrm{d}l',
\end{equation*}
where $B_{i,j}\in \mathbb{C}$, $k_i$ and $k'_j$  belong to the fundamental periodic parallelogram
formed by $2w$ and $2w'$, and we assume $w>0, \mathrm{Im}w'>0$.
Then the integral equation \eqref{kp-integ} reduces to  a linear algebraic system
\begin{align}
\psi(x,y,t;k_i)+\rho_{k_i}(y,t)\sum_{i=1}^N\sum_{j=1}^{N'}B_{j,i}\psi(x,y,t;k_i)\gamma_{k'_j}(y,t)\Psi_x(k_i,k'_j)
=\rho_{k_i}(y,t)\Psi_x(k_i) \label{kp:linear-eqs}
\end{align}
and the solution \eqref{u:KP} yields
\begin{align}\label{u:KP-2}
u(x,y,t)=-2\wp(x)-2\partial_x\sum_{i=1}^N\sum_{j=1}^{N'}B_{j,i}\psi(x,y,t;k_i)\gamma_{k'_j}(y,t)\Psi_x(k'_j).
\end{align}
Introduce
\begin{subequations}
\begin{align}
 &\mf{u}^{(0)}=(\psi(x,y,t;k_1), ~\psi(x,y,t;k_2),~\cdots,~\psi(x,y,t;k_N))^{T},\\
& \mathbf{r}=(\rho_{k_1}(y,t)\Psi_x(k_1),~ \rho_{k_1}(y,t)\Psi_x(k_2), ~\cdots,~ \rho_{k_N}(y,t)\Psi_x(k_N))^T,\\
& \mathbf{s}= (\gamma_{k'_1}(y,t)\Psi_x(k'_1),~\gamma_{k'_2}(y,t)\Psi_x(k'_2),~\cdots,~
\gamma_{k'_{N'}}(y,t)\Psi_x(k'_{N'}) )^T.
\end{align}
\end{subequations}
Then, \eqref{kp:linear-eqs} and \eqref{u:KP-2} can be formulated in matrix form
\begin{align}\label{u00}
(\mathbf{I+MB})\mf{u}^{(0)}=\mathbf{r}
\end{align}
and
\begin{equation}\label{u:KP-3}
u(x,y,t)=-2\wp(x)-2\partial_x( \mf{s}^T\mf{B}\mf{u}^{(0)}),
\end{equation}
where $\mf{I}$ is a $N\times N$ unit matrix,
 $\mathbf{B}=(B_{j,i})_{N'\times N}$ is an arbitrary constant matrix,
and $\mathbf{M}$ is a (dressed) elliptic Cauchy matrix
with each element being doubly periodic with respect to $k_i$ and $k'_j$, defined as
\begin{align}
 \mathbf{M}=(M_{i,j})_{N\times N'},\quad M_{i,j}=\rho_{k_i}(y,t)\Psi_x(k_i,k'_j)\gamma_{k'_j}(y,t).\label{M-KP}
\end{align}
Solving for $\mf{u}^{(0)}$ from \eqref{u00} and inserting it into \eqref{u:KP-3}, we arrive at
\begin{equation}\label{u:KP-4}
 u(x,y,t)=-2\wp(x)-2\partial_x (\mf{s}^T\mf{B}(\mf{I+MB})^{-1}\mf{r}),
\end{equation}
which presents an elliptic multi-soliton solution for the KP equation \eqref{eq:kp}.

We can construct the $\tau$-function by defining
\begin{equation}
\tau=|\mathbf{I+MB}|.
\end{equation}
Using a trace formula (see \cite{14-XuZhangZhao})
\begin{equation*}
\mf{s}^T\mf{B}(\mf{I+MB})^{-1}\mf{r}=\mathrm{Tr}\left(\mf{r}\mf{s}^T\mf{B}(\mf{I+MB})^{-1}\right)
= \mathrm{Tr}\left(-(\mf{I+MB})_x(\mf{I+MB})^{-1}\right)=-\frac{\tau_x}{\tau}
\end{equation*}
where relation $\mf{M}_x=-\mf{r}\mf{s}^{T}$ has been used,
solution \eqref{u:KP-4} is expressed as
\begin{equation}
u(x,y,t)=2\partial_x^2\ln\big(\sigma(x)\tau\big). 
\end{equation}
In the following, for convenience, we consider a special case where we take $N=N'$ and $\mf{B}=\mf{I}$.
The $\tau$ function  can be expanded into Hirota's form using the expansion
\begin{equation}
|\mathbf{I+M}|=1+\sum_{i=1}^N|M_{i,i}|+\sum_{i<j}\left|\begin{array}{cccc}
M_{i,i} & M_{i,j}  \\
M_{j,i}  &  M_{j,j}
\end{array}\right|+\cdots+|\mathbf{M}|,
\end{equation}
and each  term  can be expanded by the Frobenius formula\cite{82Frobenius}
\begin{equation*}
\begin{split}
\Big|\rho_{k_i} \Psi_{x}(k_i,l_j)\gamma_{l_j}\Big|_{1\leq i,j\leq N}\!\!\!
=\!\! \left(\!\prod_i^N\rho_{k_i}\gamma_{l_i}e^{-(\zeta(k_i)+\zeta(l_i))x}\!\right)\!\!
\frac{\sigma(x+\sum_{i=1}^N(k_i+l_i))}{\sigma(x)\prod_{i=1}^N\sigma(k_i+l_i)}\!\!
\prod_{i<j}\frac{\sigma(k_i-k_j)\sigma(l_i-l_j)}{\sigma(k_i+l_j)\sigma(l_i+k_j)}.
\end{split}
\end{equation*}
Consequently, we obtain an explicit form for $\tau$ function (also see \cite{23Kakei,22LxZhang}):
\begin{equation}
\tau=\sum_{J\subset S}\left(\prod_{i\in J}c_i\right)
\left(\mathop{\rm{\prod}}_{i<j \in J}A_{i,j}\right)
\frac{\sigma(x+\sum_{i\in J}(k_i+k'_i))}{\sigma(x)\prod_{i\in J}\sigma(k_i+k'_i)}
\mathrm{exp}\left(\sum_{i\in J} \theta_{[e]}(x,y,t;k_i,k'_i)\right),
\end{equation}
where  $c_i=\rho_0(k_i)\gamma_0(k_i')\in \mathbb{C}$,  the sum over all subsets $J$  of $S=\{1,2,\cdots,N\}$, and
\begin{subequations}\label{theta-Aij}
\begin{align}
&  \theta_{[e]}(x,y,t;k_i,k'_i)=-(\zeta(k_i)+\zeta(k'_i)) x+(\wp(k_i)-\wp(k'_i)) y -2(\wp'(k_i)+\wp'(k'_i))t, \\
& A_{i,j} =\frac{\sigma(k_i-k_j)\sigma(k'_i-k'_j)}{\sigma(k_i-k'_j)\sigma(k'_i-k_j)}.\label{A-ij-KP}
\end{align}
\end{subequations}
If we start from Theorem \ref{th-KP-1'}, by the same procedure, we will get a non-singular real-valued solution
\begin{equation}\label{u:KP-5}
u(x,y,t)=2\partial_x^2\ln\big(\sigma(x+w')\tau\big),
\end{equation}
where the $\tau$-function reads ($\tau=\tau_N$)
\begin{equation}
\tau_N=\sum_{J\subset S}\left(\prod_{i\in J}c_i\right)\!\!
\left(\mathop{\rm{\prod}}_{i<j \in J}A_{i,j}\right)\!
\frac{\sigma(x+w'+\sum_{i\in J}(k_i+k'_i))}{\sigma(x+w')\prod_{i\in J}\sigma(k_i+k'_i)}
\mathrm{exp}\left(\sum_{i\in J} \theta_{[e]}(x,y,t;k_i,k'_i)\right),
\end{equation}
and $\theta_{[e]}$ and $A_{i,j}$ are defined as in \eqref{theta-Aij}, but $c_{i}$ should be redefined by
\begin{equation}
c_i=\widetilde{c}_{i}\mathrm{\exp}\left(-\zeta(w')(k_i+k'_i)\right),~~~ \widetilde{c}_{i}\in \mathbb{R}.
\end{equation}

The simplest two solutions are
 the elliptic one-soliton solution (1SS)
\begin{subequations}\label{1ss}
\begin{equation}
u=2 \partial_x^2\ln\big(\sigma(x+w') \tau_1\big)
\end{equation}
where
\begin{equation}
\tau_1=1+c_{1}\frac{\sigma(x+k_1+k'_1+w')}{\sigma(x+w')\sigma(k_1+k'_1)}e^{\theta_{[e]}(x,y,t;k_1,k'_1)},
\end{equation}
\end{subequations}
and  elliptic two-soliton solution (2SS)
\begin{subequations}\label{2ss}
\begin{equation}
u=2\partial_x^2\ln\big( \sigma(x+w')\tau_2\big)
\end{equation}
where
\begin{align}
\tau_2=& 1+c_{1}\frac{\sigma(x+k_1+k'_1+w')}{\sigma(x+w')\sigma(k_1+k'_1)}
e^{\theta_{[e]}(x,y,t;k_1,k'_1)}
+c_{2}\frac{\sigma(x+k_2+k'_2+w')}{\sigma(x+w')\sigma(k_2+k'_2)}e^{\theta_{[e]}(x,y,t;k_2,k'_2)}
\nonumber\\
&
+c_{1}c_{2}A_{1,2}\frac{\sigma(x+k_1+k'_1+k_2+k'_2+w')}{\sigma(x+w')\sigma(k_1+k'_1)\sigma(k_2+k'_2)}
e^{\theta_{[e]}(x,y,t;k_1,k'_1)+\theta_{[e]}(x,y,t;k_2,k'_2)}.
\end{align}
\end{subequations}

To illustrate these solutions, we take $g_2=4$ and $g_3=1$.
The corresponding half-periods can be obtained from elliptic integrals,  numerically given by
\begin{equation}\label{w12}
  w=1.22569..., \qquad   w'=\mathrm{i}\, 1.49673... .
\end{equation}
Then $u(x,y,t)$ is a real function when $ x,y,t,k_j,l_j, \widetilde{c}_{j} \in\mathbb{R}$.
Dynamics of these solutions are  depicted in Fig.\ref{fig:1},
which describe line solitons on a periodic background that is due to $-2\wp(x+w')$.

\begin{figure}[!h]
\centering
\subfigure[]{
\begin{minipage}{5.5cm}
\includegraphics[width=\textwidth]{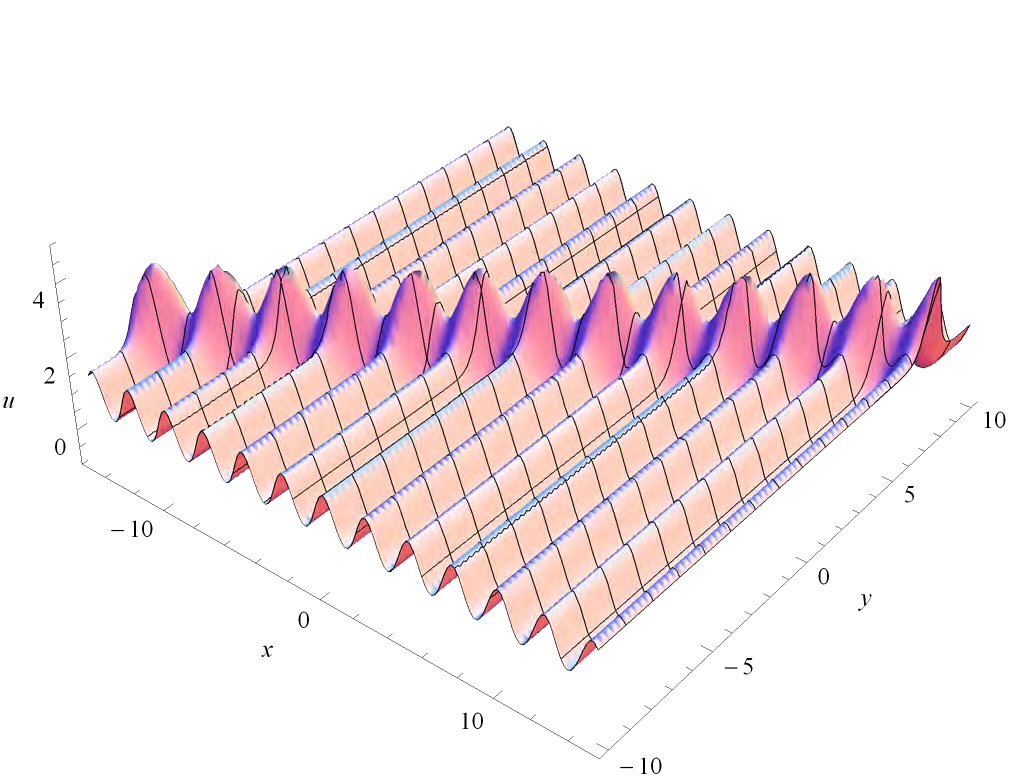}
\end{minipage}}
\hspace{5mm}
\subfigure[]{
\begin{minipage}{5.5cm}
\includegraphics[width=\textwidth]{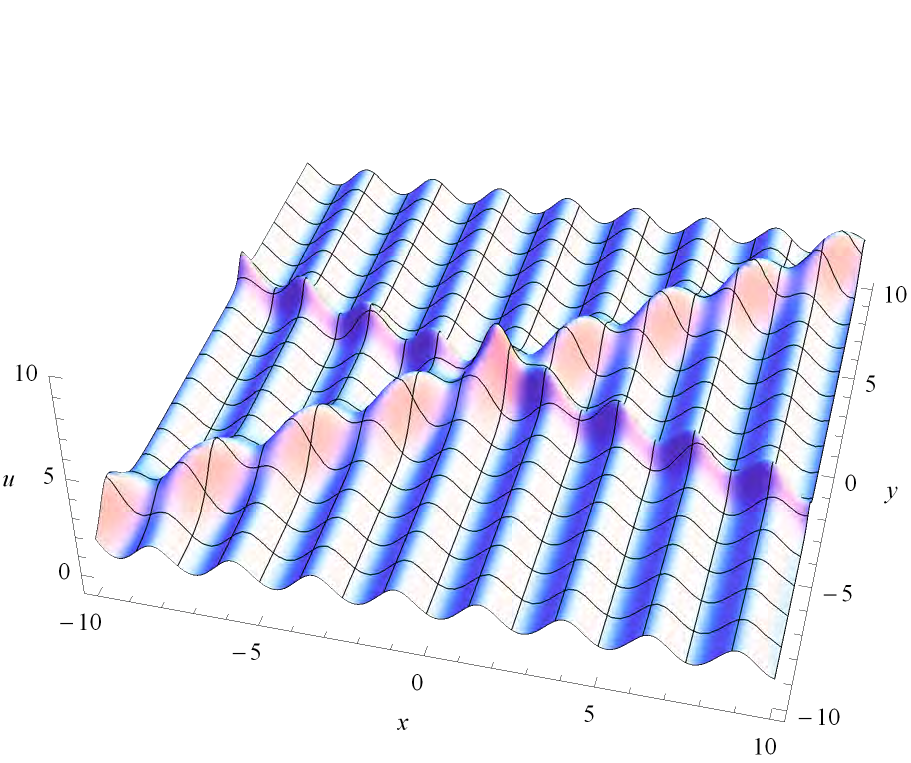}
\end{minipage}}
\caption{Shape and motion of the  elliptic solitons.
(a) Elliptic 1SS given by \eqref{1ss} for $t=0$, $k_1=0.45$, $k'_1=1.25$, $\widetilde{c}_{1}=1$.
(b) Elliptic 2SS given by \eqref{2ss} for $t=0$, $k_1=1.9$, $k'_1=1.0$, $k_2=0.8$, $k'_2=0.4$,
$\widetilde{c}_{1}=\widetilde{c}_{2}=1$.}
\label{fig:1}
\end{figure}

There can be resonance of two line solitons when we take $k'_1=k'_2$ but $k_1\neq k_2$,
such that $A_{1,2}=0$ and $\tau_2$ reads
\begin{equation}\label{Yss}
\tau_2= 1+c_{1}\frac{\sigma(x+k_1+k'_1+w')}{\sigma(x+w')\sigma(k_1+k'_1)}e^{\theta_{[e]}(x,y,t;k_1,k'_1)}
+c_{2}\frac{\sigma(x+k_2+k'_1+w')}{\sigma(x+w')\sigma(k_2+k'_1)}e^{\theta_{[e]}(x,y,t;k_2,k'_1)}.
\end{equation}
In this case, we get Y-shape interactions, as depicted in Fig.\ref{fig:2}.

\begin{figure}[!h]
\centering
\begin{minipage}{5.5cm}
\includegraphics[width=\textwidth]{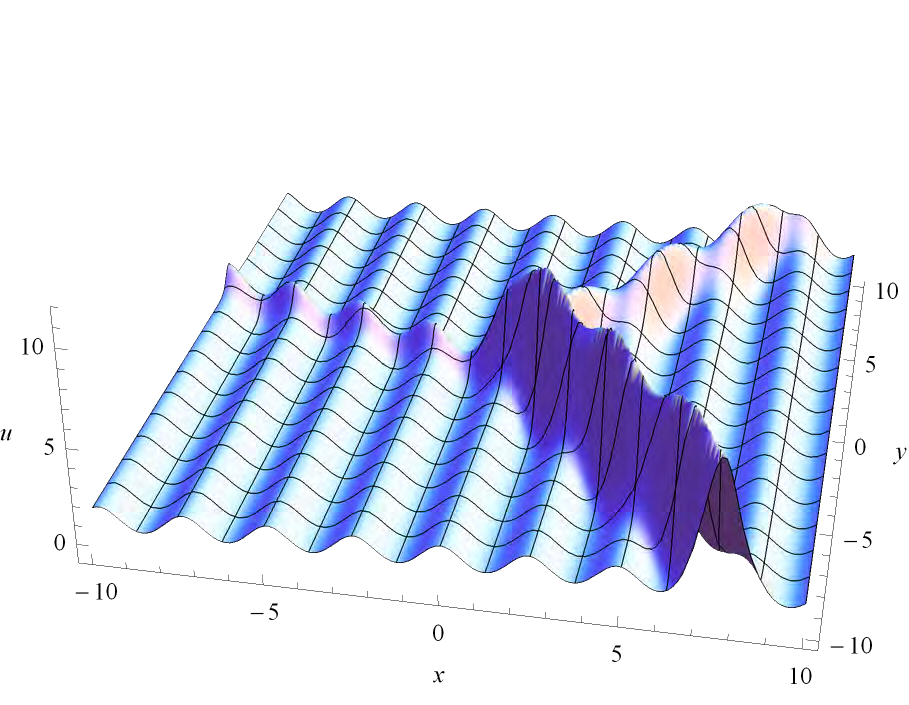}
\end{minipage}
\caption{Shape and motion of  elliptic 2SS resonance with \eqref{Yss} for $t=0$, $k_1=1.9$,
$k'_1=k'_2=0.4$, $k_2=0.8$,
$\widetilde{c}_{1}=\widetilde{c}_{2}=1$.}
\label{fig:2}
\end{figure}

\section{Reduction to the KdV and Boussinesq}\label{sec-3}

\subsection{Elliptic $N$th roots of unity}\label{sec-3-1}

It is well known that soliton solutions of the KP equation can be reduced to those of the KdV equation,
Boussinesq  equation and higher order equations in the Gel'fand-Dickey hierarchy.
This can be done by imposing certain constraints on the eigenvalues (i.e. $k_i$ and $l_i$)
of the KP solution so that the dispersion relations of the lower dimensional equations are obtained.
In soliton case, such reductions are  achieved  through the constraint $l=\omega k$
where $\omega$ is  $N$th  root of unity.
In elliptic soliton case, the elliptic $N$th  roots of unity were introduced in \cite{NSZ-2019},
which is used to reduce the elliptic DL schemes of the lattice KP system to the DL scheme of the lattice KdV system
and lattice Boussinesq  system \cite{NSZ-2019},
as well as to reduce $\tau$-functions and vertex operators in the continuous case \cite{22LxZhang}.
Elliptic $N$th  roots of unity are defined as the following.

\begin{definition}\label{D-1}
\cite{NSZ-2019} There exist distinct $\{\omega_j(\delta)\}_{j=0}^{N-1}$,
up to the periodicity of the periodic lattice, such that the following equation holds,
\begin{subequations}
\begin{equation}\label{root}
\prod^{N-1}_{j=0}\frac{\sigma(\kappa+\omega_j(\delta))}{\sigma(\kappa)\sigma(\omega_j(\delta))}
=\frac{1}{(N-1)!}(\wp^{(N-2)}(-\kappa)-\wp^{(N-2)}(\delta)),
\end{equation}
where $\omega_0(\delta)=\delta$ and all $\{\omega_j(\delta)\}$ are independent of $\kappa$.
$\{\omega_j(\delta)\}_{j=0}^{N-1}$ are called elliptic $N$th  roots of the unity.
These roots also satisfy (modulo the period lattice)
\begin{equation}
\sum^{N-1}_{j=0}\omega_j(\delta)=0, \quad \sum^{N-1}_{j=0}\zeta^{(l)}(\omega_j(\delta))=0,~
l =0,1,\cdots,N-2.
\end{equation}
\end{subequations}
\end{definition}

\subsection{Reduction to the KdV equation}\label{sec-3-2}

\subsubsection{DL scheme}\label{sec-3-2-1}

The integral equation \eqref{kp-integ} of the KP equation can be reduced to the case of the KdV equation
by taking a specific measure
\begin{equation}\label{reduct-kdv}
\mathrm{d}\lambda(l,l')=\delta(l'+\omega_1(l))\mathrm{d}\lambda(l)\mathrm{d}l',
\end{equation}
in which we  use the elliptic square roots  $\omega_0(l)=l$, and $\omega_1(l)=-l$.
The trick to get the integral equation for the KdV equation is the following.
In the first step, using \eqref{reduct-kdv} we reduce the integral equation \eqref{kp-integ} from
a double-integral form to a single-integral form
\begin{equation}\label{3.3}
\psi(x,y,t;k)+\rho_k(y,t) \int_L\psi(x,y,t;l)\gamma_{l}(y,t)\Psi_x{(k,l)}\mathrm{d}\lambda(l)=\Psi_x{(k)} \rho_k(y,t),
\end{equation}
where $L$, $\mathrm{d}\lambda(l)$ are suitable contour and measure.
Next, we introduce $\phi(x,y,t;k)$ such that
\begin{equation}
\psi(x,y,t;k)=\phi(x,y,t;k)\rho_k(y,t),
\end{equation}
by which it follows from \eqref{3.3} that
\begin{equation}\label{3.5}
\phi(x,y,t;k)+\int_L\phi(x,y,t;l)\rho_l(y,t)\gamma_{l}(y,t)\Psi_x{(k,l)}\mathrm{d}\lambda(l)=\Psi_x{(k)}.
\end{equation}
It turns out that in the kernel
\[\rho_l(y,t)\gamma_{l}(y,t)\Psi_x{(k,l)}=\rho_0(l)\gamma_0(l)\Psi_x{(k,l)}e^{-4\wp'(l)t}\]
is independent of $y$,
which means there exists a solution $\phi(x,y,t;k)$ that is independent of $y$ as well.
We denote the above $\phi(x,y,t;k)$ to be $\phi(x,t;k)$ without making any confusion.
Then we introduce
\begin{equation}\label{varrho}
\varrho_k(t)=\varrho_0(k)e^{-2\wp'(k)t},
\end{equation}
and  $\varphi(x,t;k)=\phi(x,t;k) \varrho_k(t)$
and take $\gamma_0(k)=\varrho_0(k)$,
we arrive at an integral equation  from \eqref{3.5}
\begin{equation}\label{3.7}
\varphi(x,t;k)+\varrho_k(t)\int_L\varphi(x,t;l)\varrho_l(t)\Psi_x{(k,l)}\mathrm{d}\lambda(l)=\Psi_x{(k)}\varrho_k(t).
\end{equation}
Repeating the same procedure,  the solution can be reduced as
\begin{equation}\label{u-int-kdv}
u(x,t)=-2\wp(x)-2\partial_x\int_L  \varphi(x,t;l)\varrho_{l}(t)\Psi_x(l)  \mathrm{d}\lambda(l).
\end{equation}
These two settings consist of the elliptic DL scheme for the KdV equation \eqref{kdv0}.

\begin{theorem}\label{th:kdvDLA}
If the integral equation \eqref{3.7} satisfies:
(a) the contour $L$ and measure $\mathrm{d}\lambda(l)$ permit the interchange of $\int_L$
and the differentials w.r.t. $x, t$,
(b) the homogeneous integral equation has only the zero solution,
then the function $u(x,t)$ defined by \eqref{u-int-kdv} solves the KdV equation  \eqref{kdv0}.
\end{theorem}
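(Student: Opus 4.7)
The natural plan is to treat Theorem \ref{th:kdvDLA} as a reduction of Theorem \ref{th-KP-1} rather than as an independent verification against the KdV Lax pair. The idea is that the KdV integral equation \eqref{3.7} is precisely what the KP integral equation \eqref{kp-integ} becomes under the singular measure \eqref{reduct-kdv}, so inheriting a solution of KP with $u_y=0$ automatically gives a solution of KdV, since the term $3\partial_x^{-1}u_{yy}$ in \eqref{eq:kp} then drops out.

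I would first substitute the measure \eqref{reduct-kdv} into \eqref{kp-integ} and perform the $l'$-integration using $\omega_1(l)=-l$, which collapses the double integral to the single integral \eqref{3.3}. Next, I would set $\psi(x,y,t;k)=\phi(x,y,t;k)\rho_k(y,t)$ and factor $\rho_k(y,t)$ out of both sides, leading to \eqref{3.5}. The key observation at this step is that the kernel becomes
\[
\rho_l(y,t)\gamma_l(y,t)\Psi_x(k,l) = \rho_0(l)\gamma_0(l)\, e^{-4\wp'(l)t}\,\Psi_x(k,l),
\]
in which the $y$-dependent exponentials $e^{\wp(l)y}$ and $e^{-\wp(l)y}$ cancel. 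Since the right-hand side $\Psi_x(k)$ of \eqref{3.5} is manifestly independent of $y$, invoking the uniqueness hypothesis allows us to select the unique solution $\phi=\phi(x,t;k)$ that is $y$-independent as well. Introducing $\varrho_k(t)$ by \eqref{varrho}, setting $\gamma_0=\varrho_0$ (and similarly $\rho_0=\varrho_0$, so that $\rho_l\gamma_l=\varrho_l^2$), and putting $\varphi=\phi\,\varrho_k$, equation \eqref{3.5} becomes exactly \eqref{3.7}.

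I would then repeat the same substitution in the solution formula \eqref{u:KP}: the reduced measure turns the double integral into a single one, and the same cancellation produces \eqref{u-int-kdv} with $u(x,t)$ manifestly independent of $y$. At this point Theorem \ref{th-KP-1} applies (its two hypotheses — commutativity of $\partial_\mu$ with the integral, and uniqueness — being inherited from the hypotheses of Theorem \ref{th:kdvDLA} and from the trivial $y$-dependence), so $u(x,t)$ solves the KP equation \eqref{eq:kp}. Because $u_y\equiv 0$, the nonlocal term $3\partial_x^{-1}u_{yy}$ vanishes and \eqref{eq:kp} reduces to $u_t+6uu_x+u_{xxx}=0$, which is \eqref{kdv0}.

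The only genuinely delicate point, which I would flag as the main obstacle, is the argument that $\phi$ can be taken $y$-independent. A priori the uniqueness hypothesis in Theorem \ref{th-KP-1} is imposed on \eqref{kp-integ} with $y$-dependent data; one must check that the reduced equation \eqref{3.5} inherits a uniqueness statement of its own, so that the candidate $y$-independent solution is actually the one produced by the KP reduction. I would handle this by noting that if $\phi$ were a solution, so would be $\partial_y\phi$ of the homogeneous equation, and then use the uniqueness assumption of the integral equation \eqref{3.7} (equivalently of \eqref{3.5}) to conclude $\partial_y\phi=0$. All remaining steps — interchanging $\partial_x,\partial_t$ with the single integral, re-tracing the reduction inside $u$, and identifying $3\partial_x^{-1}u_{yy}=0$ — are routine once this point is settled.
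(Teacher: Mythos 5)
Your argument is correct, but it is not the proof the paper gives. The paper proves Theorem \ref{th:kdvDLA} the same way it proves Theorem \ref{th-KP-1}: it asserts (and skips the details of) a direct verification that $\varphi(x,t;k)$ from \eqref{3.7} and $u(x,t)$ from \eqref{u-int-kdv} satisfy the KdV Lax pair $\big(\partial_x^2+u-\wp(k)\big)\varphi=0$, $\big(-\partial_t+u_x-(4\wp(k)+2u)\partial_x\big)\varphi=0$; the reduction from KP in Section \ref{sec-3-2-1} is used only to \emph{derive} the scheme \eqref{3.7}--\eqref{u-int-kdv}, not to prove the theorem. You instead inherit the conclusion from Theorem \ref{th-KP-1}: specialize the measure via \eqref{reduct-kdv}, show the kernel and hence (by uniqueness, via your $\partial_y\phi$ argument) the solution are $y$-independent, and observe that the KP equation with $u_y\equiv 0$ collapses to \eqref{kdv0}. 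This is a legitimate and arguably more economical route --- it avoids redoing the $P\psi=0$, $M\psi=0$ computation --- provided you also note that hypothesis (b) for \eqref{3.7} implies the uniqueness hypothesis for the reduced \eqref{kp-integ} (any homogeneous solution $\psi$ automatically factors as $\rho_k$ times a homogeneous solution of \eqref{3.5}, which after multiplying by $\varrho_k$ is a homogeneous solution of \eqref{3.7}), and that $\partial_x^{-1}u_{yy}=0$ under the chosen antiderivative. What the paper's direct route buys, and yours does not, is the explicit second-order (Schr\"odinger) Lax representation for $\varphi$ with eigenvalue $\wp(k)$, which is the structure underlying the Marchenko construction in Section \ref{sec-3-2-2}; your route buys a cleaner logical dependence on the already-proved KP theorem. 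Your flagged delicate point (the $y$-independence of $\phi$) is the right one to worry about and your resolution of it is sound.
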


The proof is similar to the KP case. One can prove that $\varphi(x,t;k)$ and $u(x,t)$ defined by
\eqref{3.7} and \eqref{u-int-kdv} agree with the  Lax pair of the KdV equation:
\begin{subequations}
\begin{align}
&P \varphi(x,t;k)=\Big(\partial_x^2+u(x,t)-\wp(k)\Big)\varphi(x,t;k)=0,\\
&M \varphi(x,t;k)=\Big(-\partial_t+u_x(x,t)-(4\wp(k)+2u(x,t))\partial_x\Big)\varphi(x,t;k)=0.
\end{align}
\end{subequations}
We skip presenting the proof.

\subsubsection{Marchenko equation}\label{sec-3-2-2}

Similar to the treatment in section \ref{sec-2-2},
here we again restrict our discussion to the case where $x, g_2, g_3\in \mathbb{R}$, $w>0$,
$\mathrm{Im} w' >0$ and $\Delta >0$.
These settings allow us to have an alternative elliptic DL scheme which can yield real non-singular
solutions for the KdV equation.

\begin{theorem}\label{th:kdvDLA'}
Assume $\varrho_k(t)$, $\widetilde{ \Psi}_x(k)$ and $\widetilde{ \Psi}_x(k,l)$ are defined as
in \eqref{varrho}, \eqref{Ps-k} and \eqref{Ps-kl}.
For $\varphi(x,t;k)$ defined by the integral equation
\begin{equation}
\varphi(x,t;k)+\varrho_k(t) \int_L   \varphi(x,t;l)\varrho_{l}(t)\widetilde{ \Psi}_{x}{(k,l)}\mathrm{d}\lambda(l)
=\widetilde{ \Psi}_{x}(k)\varrho_k(t),\label{integ-KdV-move}
\end{equation}
the KdV equation \eqref{kdv0} has a solution
\begin{equation}
u(x,t)=-2\wp(x+w')-2\partial_x\int_L \varphi(x,t;l)\varrho_{l}(t)
\widetilde{ \Psi}_{x}(l)\mathrm{d}\lambda(l).\label{u:kdv}
\end{equation}
\end{theorem}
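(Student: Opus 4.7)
The plan is to derive Theorem \ref{th:kdvDLA'} directly from Theorem \ref{th:kdvDLA} by invoking the translation invariance of the KdV equation in $x$ together with the freedom in the choice of $\varrho_0(k)$, in the same spirit in which Theorem \ref{th-KP-1'} was obtained from Theorem \ref{th-KP-1}.

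First I would perform the shift $x\mapsto x+w'$ in the integral equation \eqref{3.7} and in the solution formula \eqref{u-int-kdv}. Since \eqref{kdv0} is autonomous in $x$, the shifted formula still delivers a genuine solution; the only changes are that $\wp(x)$, $\Psi_x(k)$ and $\Psi_x(k,l)$ get replaced by $\wp(x+w')$, $\Psi_{x+w'}(k)$ and $\Psi_{x+w'}(k,l)$. Next, comparing with the definitions \eqref{psi-phi}, \eqref{psi-2}, \eqref{Ps-k} and \eqref{Ps-kl} gives
\begin{subequations}
\begin{align}
\widetilde{\Psi}_x(k)&=A(k)\,\Psi_{x+w'}(k),\\
\widetilde{\Psi}_x(k,l)&=A(k)\,A(l)\,\Psi_{x+w'}(k,l),
\end{align}
\end{subequations}
with $A(k)\doteq e^{\zeta(k)w'-\zeta(w')k}$ a function of $k$ alone (and $w'$). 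The decisive point is that the correction factor for the two-parameter Lam\'e function \emph{factorizes} as $A(k)A(l)$.

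Because of this factorization, the redefinition $\varrho_0(k)\mapsto \varrho_0(k)/A(k)$, which is permitted since $\varrho_0$ is arbitrary in \eqref{varrho}, absorbs every factor of $A$ in both the shifted integral equation and the shifted solution formula simultaneously. A short bookkeeping calculation then shows that, under this redefinition, the shifted version of \eqref{3.7} becomes exactly \eqref{integ-KdV-move}, and the shifted version of \eqref{u-int-kdv} becomes exactly \eqref{u:kdv}. The theorem follows.

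The main (indeed, essentially only) delicate step is verifying the multiplicative splitting of the correction factor in $\widetilde{\Psi}_x(k,l)$. Had this factor not split as $A(k)A(l)$, the redefinition of the single-variable weight $\varrho_0$ could not have absorbed it, and a genuinely new argument, such as a direct Lax-pair verification paralleling the proof of Theorem \ref{th-KP-1} but for the KdV operators $P$ and $M$ listed after Theorem \ref{th:kdvDLA}, would have been required. Once the factorization is in hand, the rest is routine.
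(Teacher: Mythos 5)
Your argument is correct and is essentially the paper's own route: the paper obtains this theorem (as it does Theorem \ref{th-KP-1'} from Theorem \ref{th-KP-1}) by combining the shift symmetry $x\mapsto x+w'$ of the autonomous equation with the arbitrariness of the weight $\varrho_0(k)$, which is exactly your absorption of the factor $A(k)=e^{\zeta(k)w'-\zeta(w')k}$. Your explicit verification that the correction to $\widetilde{\Psi}_x(k,l)$ factorizes as $A(k)A(l)$ — the one point the paper leaves implicit, and the one that matters here since the KdV reduction has only a single weight function to redefine — is accurate and completes the argument.
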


It is worth noting that the relation \eqref{kp:inte-cond},  i.e.
\begin{equation}\label{kdv:inte-cond-1}
\frac{\mathrm{d}}{\mathrm{d}x}\widetilde{ \Psi}_x(k,l)=-\widetilde{ \Psi}_{x}(k)\widetilde{ \Psi}_{x}(l)
\end{equation}
 still holds for  $\widetilde{ \Psi}_x(k)$ and $\widetilde{ \Psi}_x(k,l)$ defined in \eqref{2.26}.
 The key clue is to ensure that the right-hand side of the aforementioned equation
 converges to zero rapidly enough as $x\to +\infty$.
 Multiplying both sides of the integral equation  \eqref{integ-KdV-move}
 by the function $\widetilde{\Psi}_{s}(k)\varrho_{k}(t)$ and integrating along $L$  in the spectral parameter $k$-plane,
 the establishment of the following relation
 \begin{equation}\label{eq:kp-GLM-condit-1}
\int_x^{+\infty}\widetilde{ \Psi}_{\xi}(k)\widetilde{ \Psi}_{\xi}(l)\mathrm{d}\xi
=\widetilde{ \Psi}_x(k,l),
\end{equation}
allows the separation of parameters $k$ and $l$.
Consequently we can have a Marchenko equation
\begin{equation}\label{Marc-eq-kdv}
K(x,s,t)+F(x,s,t)+\int_x^{\infty}K(x,\xi,t)F(\xi,s,t)\mathrm{d}\xi=0,
\end{equation}
and the corresponding solution for the KdV equation:
\begin{equation}
u(x,t)=-2\wp(x+w')+2 \partial_x K(x,x,t),
\end{equation}
where
\begin{subequations}\label{KF-kdv}
\begin{align}
K(x,s,t)&=-\int_L\varphi(x,t;l)\widetilde{\Psi}_{s}(l)\varrho_{l}(t)\mathrm{d}\lambda(l),\\
F(x,s,t)&=\int_L\widetilde{\Psi}_{x}(k)\widetilde{\Psi}_{s}(k)\varrho_{k}^2(t) \mathrm{d}\lambda(k).
\end{align}
\end{subequations}

In \cite{74KM}, the Marchenko equation \eqref{Marc-eq-kdv} was directly  introduced
by assuming  $F(x,s,t)$ satisfies a kind of Lax pair of the KdV equation
( see equations (5) and (6) in  \cite{74KM}).
As discussed in section \ref{sec-2-2}, when  $k$ and $l$ take values within the interval $(0,w)$,
the validity of integral \eqref{eq:kp-GLM-condit-1} can be ensured.
At this point, the integration contour should be adjusted to encircle the discrete points $k_m\in(0,w)$.
Consequently,  $K(x,s,t)$ and $F(x,s,t)$ are written as
\begin{subequations}
\begin{align}
K(x,s,t)&=\sum_{m=1}^N \varphi(x,t;k_m)\widetilde{\Psi}_{s}(k_m)\varrho_{k_m}(t),\\
F(x,s,t)&=\sum_{m=1}^N  \widetilde{\Psi}_{x}(k_m)\widetilde{\Psi}_{s}(k_m)\varrho_{k_m}^2(t),
\end{align}
\end{subequations}
which  yield elliptic $N$-soliton solutions from the Marchenko equation \eqref{Marc-eq-kdv}.
Similar case also happened when deriving the Marchenko equation from the DL integral equation
for the usual soliton case (see equation (6) in \cite{81FA}
where they assumed $\mathrm{Im}k>0$ and $\mathrm{Im}l>0$,
which actually excludes the continuous spectrum from $F(x,t)$ for the KdV equation).

\subsubsection{Real-valued  elliptic solitons}\label{sec-3-2-3}

We assume that $L$ is a contour in the fundamental rectangular domain with vertices
$-w-w', w-w', w+w', -w+w'$,
$\{k_j\}$ belongs to the domain circled by $L$,
and the measure $\mathrm{d}\lambda(l)$ is taken as
\begin{equation}\label{measure-kdv}
\mathrm{d}\lambda(l)=\sum_{j=1}^N\frac{1}{2\pi  \mathrm{i}}\frac{ \mathrm{d}l}{l-k_j}.
\end{equation}
Then the integral equation \eqref{3.7} gives rise to a linear equation system
\begin{subequations}
\begin{equation}
(\mathbf{I+M})\mathbf{u}^{(0)}=\mathbf{r},
\end{equation}
where
\begin{align}
& \mf{M}=\left(M_{i,j}\right)_{N\times N},
~~~ M_{i,j}=\varrho_{k_i}(t)\Psi_{x}(k_i,k_j)\varrho_{k_j}(t),\\
& \mathbf{u}^{(0)}= (\varphi(x,t;k_1),  \cdots,  \varphi(x,t;k_N))^T,\\
& \mathbf{r}=(\varrho_{k_1}(t)\Psi_x(k_1),\cdots, \varrho_{k_N}(t)\Psi_x(k_N))^T.
\end{align}
\end{subequations}
Consequently, \eqref{u-int-kdv} is written as
\begin{equation}
u=-2\wp(x)-2\partial_x( \mf{r}^T\mf{u}^{(0)})= 2\partial_x^2\ln\big(\sigma(x)\tau\big),
\end{equation}
where the $\tau$-function is written as
\begin{equation}\label{tau-Hirota-KDV}
\tau=|\mf{I+M}|
=\sum_{J\subset S}\left(\prod_{i\in J}c_i\right)\Biggl(\prod_{i,j\in J\atop i<j}A_{i,j}\Biggr)
\frac{\sigma(x+2\sum_{i\in J}k_i)}{\sigma(x)\prod_{i\in J}\sigma(2k_i)}
\exp\left(\sum_{i\in J}\theta_{[e]}(x,t;k_i)\right),
\end{equation}
where $c_i=\varrho_0^2(k_i)\in \mathbb{C}$, $\sum_{J\subset S}$
means the summation runs over all subsets $J$ of $S=\{1,2,\cdots,N\}$, and
\begin{align}\label{theta-A-KDV}
 \theta_{[e]}(x,t;k_i)=-2\zeta(k_i) x -4\wp'(k_i)t, \qquad A_{i,j}
=\left(\frac{\sigma(k_i-k_j)}{\sigma(k_i+k_j)}\right)^2.
\end{align}

Similar to the KP case, starting from the integral equation \eqref{integ-KdV-move} rather than \eqref{3.7},
we will have non-singular soliton solutions
\begin{equation}\label{u:kdv-tau}
u= 2\partial_x^2\ln \big(\sigma(x+w')\tau\big),
\end{equation}
where
\begin{equation}
\tau=\tau_N
=\sum_{J\subset S}\left(\prod_{i\in J}c_i\right)\Biggl(\prod_{i,j\in J\atop i<j}A_{i,j}\Biggr)
\frac{\sigma(x+2\sum_{i\in J}k_i+w')}{\sigma(x+w')\prod_{i\in J}\sigma(2k_i)}
\exp\left(\sum_{i\in J}\theta_{[e]}(x,t;k_i)\right),
\end{equation}
and $\theta_{[e]}(x,t;k_i)$ and $A_{i,j}$ are defined as in \eqref{theta-A-KDV},
but $c_{i}$ in this turn is redefined as
\begin{equation}
c_i=\widetilde{c}_{i}\exp\left(-2\zeta(w') k_i\right),~~~ \widetilde{c}_{i}\in \mathbb{R}.
\end{equation}
The first two $\tau$-functions read
\begin{equation}\label{tau-1}
\tau_1=1+c_1\frac{\sigma(x+w'+2k_1)}{\sigma(x+w')\sigma(2k_1)}e^{\theta_{[e]}(x,t;k_1)},
\end{equation}
and
\begin{align}\label{tau-2}
\tau_2&= 1+c_1\frac{\sigma(x+w'+2k_1)}{\sigma(x+w')\sigma(2k_1)}e^{\theta_{[e]}(x,t;k_1)}
+c_2\frac{\sigma(x+w'+2k_2)}{\sigma(x+w')\sigma(2k_2)}e^{\theta_{[e]}(x,t;k_2)}\nonumber\\
&+c_1 c_2A_{1,2}\frac{\sigma(x+w'+2k_1+2k_2)}{\sigma(x+w')\sigma(2k_1)\sigma(2k_2)}
e^{\theta_{[e]}(x,t;k_1)+\theta_{[e]}(x,t;k_2)}.
\end{align}
The corresponding elliptic soliton solutions are depicted in Fig.\ref{fig:3}
where we have taken $g_2=4$, $g_3=1$ for which the corresponding half-periods are given in \eqref{w12}.

\begin{figure}[!h]
\centering
\subfigure[]{
\begin{minipage}{5.5cm}
\includegraphics[width=\textwidth]{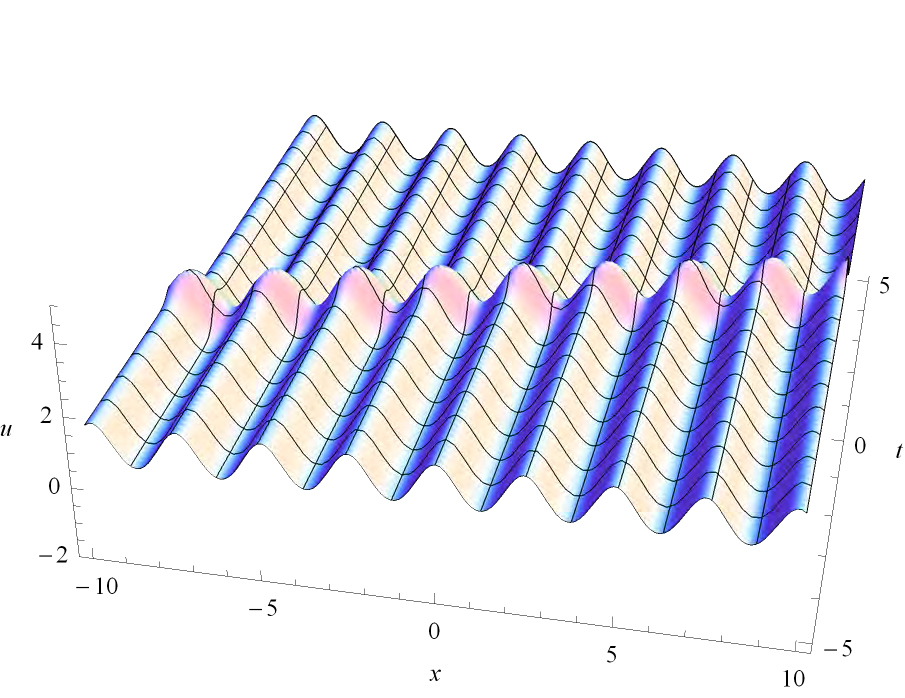}
\end{minipage}
}
\hspace{5mm}
\subfigure[]{
\begin{minipage}{5cm}
\includegraphics[width=\textwidth]{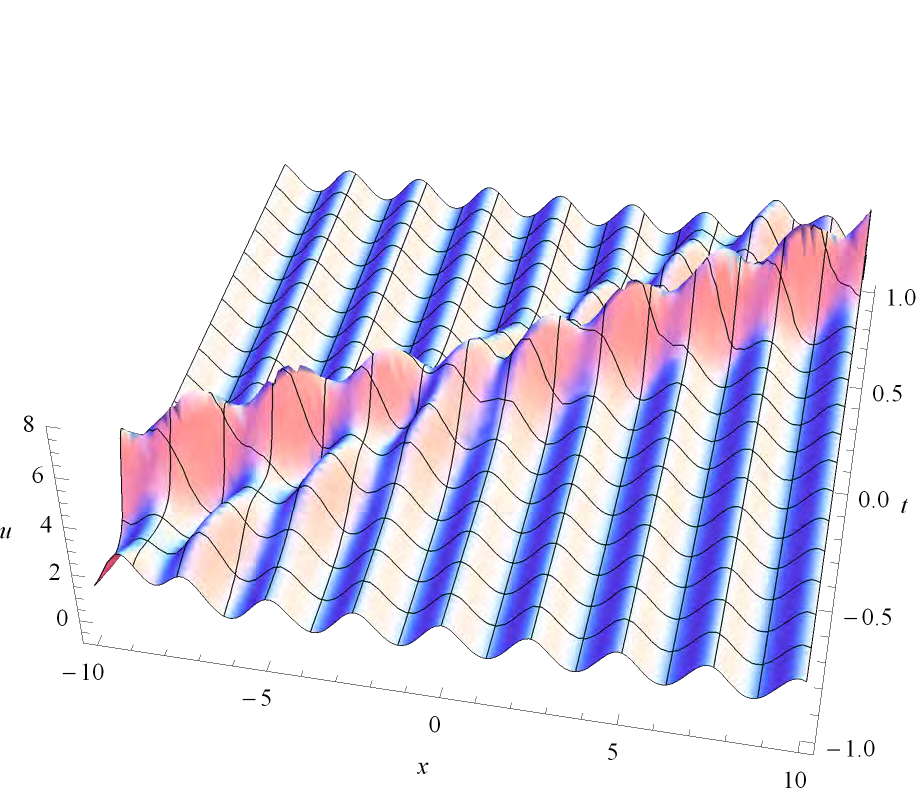}
\end{minipage}}
\caption{ Shape and motion of elliptic solitons of the KdV equation.
(a) Elliptic 1SS \eqref{u:kdv-tau} with \eqref{tau-1} for $k_1=0.9$ and $\widetilde{c}_{1}=1$.
(b) Elliptic 2SS \eqref{u:kdv-tau} with \eqref{tau-2} for $k_1=0.8$, $k_2=0.6$
and $\widetilde{c}_{1}=\widetilde{c}_{2}=1$.}
\label{fig:3}
\end{figure}

\subsection{Reduction to the Boussinesq equation}\label{sec-3-3}

\subsubsection{DL scheme}\label{sec-3-3-1}

In this section, we reduce the DL scheme and solutions of the KP equation to those of the  Boussinesq  equation,
\begin{equation}\label{BSQ}
u_{xxxx}+6u^2_x+6uu_{xx}+3u_{yy}=0,
\end{equation}
for which the elliptic cube roots $\omega_i(\delta)$ $(i=0,1,2)$ as defined
in Definition \ref{D-1} are needed, i.e. $\wp'(\delta)=\wp'(\omega_i(\delta))$,
where, especially,  we set $\omega_0(\delta)=\delta$.
The reduction can be achieved by  introducing a  special measure
in the linear integral equation of the KP equation \eqref{kp-integ} (cf.\cite{NSZ-2019}):
\begin{equation}\label{reduct-bsq}
\mathrm{d}\lambda(l,l')=\sum_{j=1}^2\delta(l'+\omega_j(l))\mathrm{d}\lambda_j(l)\mathrm{d}l'.
\end{equation}
The reduction procedure follows a similar approach to that of Theorem \ref{th:kdvDLA}.
Thus we omit it and only present the main results.

\begin{theorem}\label{th:BSQDLA}
We assume that if $\psi(x,y;k)$ solves
\begin{equation}\label{integ-BSQ}
\psi(x,y;k)+\mu_k(y)\sum_{j=1}^2\int_{L_j}
\psi(x,y;l)\gamma_{-\omega_j(l)}(y)\Psi_x{(k,-\omega_j(l))}\mathrm{d}\lambda_j(l)=\mu_k(y)\Psi_x{(k)},
\end{equation}
where $L_j$, $\mathrm{d}\lambda_j(l)$ are certain contour and measure,
\begin{align}
\mu_k(y)=\mu_0(k) e^{\wp(k)y},   \qquad
\gamma_{-\omega_j(l)}(y)=\gamma_{0}(\omega_j(l)) e^{-\wp(\omega_j(l))y},
\end{align}
and  $\omega_i(\delta)$ are the elliptic cube roots for $j=1,2$.
We also assume that differentials with respect to $x, y$ can interchange with $\int_{L_j}$,
and the homogeneous integral equation of \eqref{integ-BSQ} has only zero solution.
Then the Boussinesq  equation \eqref{BSQ} allows a solution
\begin{equation}\label{u-int-bsq}
u(x,y)=-2\wp(x)-2\partial_x\sum_{j=1}^2\int_{L_j} \psi(x,y;l)\gamma_{-\omega_j(l)}(y)
\Psi_x(-\omega_j(l))\mathrm{d}\lambda_j(l).
\end{equation}
\end{theorem}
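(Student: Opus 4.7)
The plan is to follow the KdV reduction of Section~\ref{sec-3-2-1} verbatim, only using the elliptic cube roots in place of the square roots, so that the Boussinesq equation (which has no $t$-dependence) emerges from KP once the $t$-variable is eliminated. First I would substitute the measure \eqref{reduct-bsq} into the KP integral equation \eqref{kp-integ}; the two delta distributions collapse the $l'$-integration and yield
\begin{equation*}
\psi(x,y,t;k)+\rho_k(y,t)\sum_{j=1}^{2}\int_{L_j}\psi(x,y,t;l)\,\gamma_{-\omega_j(l)}(y,t)\,\Psi_x(k,-\omega_j(l))\,\mathrm{d}\lambda_j(l)=\rho_k(y,t)\Psi_x(k).
\end{equation*}
Next I would make the separation ansatz $\psi(x,y,t;k)=\widetilde{\psi}(x,y;k)\,e^{-2\wp'(k)t}$ and check that every $t$-exponential drops out, thereby reducing the display above to \eqref{integ-BSQ}.

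The cancellation is exactly where the defining property $\wp'(\omega_j(l))=\wp'(l)$ of the elliptic cube roots is used. Combined with the parities ($\wp$ even, $\wp'$ odd), the product $\gamma_{-\omega_j(l)}(y,t)\,e^{-2\wp'(l)t}$ collapses to $\gamma_0(-\omega_j(l))\,e^{-\wp(\omega_j(l))y}$, which is purely a function of $y$; after relabelling $\gamma_0(-\omega_j(l))\mapsto\gamma_0(\omega_j(l))$ and identifying $\mu_0(k)=\rho_0(k)$ so that $\mu_k(y)=\rho_k(y,t)\,e^{2\wp'(k)t}$, one recovers precisely \eqref{integ-BSQ}. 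The same ansatz, substituted into the KP solution formula \eqref{u:KP}, produces a manifestly $t$-independent $u(x,y)$ of the form \eqref{u-int-bsq}, because every summand in the integrand carries a factor $e^{-2\wp'(l)t+2\wp'(\omega_j(l))t}=1$.

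At this point Theorem~\ref{th-KP-1} applies to the unreduced pair $(\psi,u)$ and guarantees that $u$ solves the KP equation~\eqref{eq:kp}. Since $u_t\equiv 0$, KP collapses to $u_{xxx}+6uu_x+3\partial_x^{-1}u_{yy}=0$; differentiating once in $x$ gives exactly the Boussinesq equation~\eqref{BSQ}. The step I expect to require most care is the mutual compatibility of the $t$-ansatz with the original KP integral equation under the reduced measure: one must verify that the $t$-separable $\psi$ above is really annihilated by the reduced operator, and this in turn hinges entirely on the balance $\wp'(-\omega_j(l))=-\wp'(l)$ forced by the delta support in \eqref{reduct-bsq}. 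The remaining ingredients, namely the interchange of $\partial_x,\partial_y$ with $\int_{L_j}$ and the triviality of the homogeneous integral equation, are simply inherited from the standing hypotheses of Theorem~\ref{th:BSQDLA}.
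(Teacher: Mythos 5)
Your proposal is correct, and the reduction mechanics (substituting the measure \eqref{reduct-bsq}, collapsing the $l'$-integration, and using $\wp'(\omega_j(l))=\wp'(l)$ together with the parities of $\wp$ and $\wp'$ to cancel every $t$-exponential) coincide with what the paper does. Where you genuinely diverge is in the final verification: the paper, following the pattern of Theorem \ref{th:kdvDLA}, proves the reduced scheme by checking that $\psi(x,y;k)$ and $u(x,y)$ satisfy the Boussinesq Lax pair, i.e.\ the second-order problem $P\psi=0$ together with the third-order operator $M$ containing $-\tfrac12\wp'(k)$ in \eqref{BSQ-lax}; you instead lift a solution of \eqref{integ-BSQ} to a $t$-separable solution $\widetilde\psi(x,y;k)e^{-2\wp'(k)t}$ of the KP integral equation \eqref{kp-integ} with the reduced measure, invoke Theorem \ref{th-KP-1}, and observe that the resulting $u$ is $t$-independent, so that \eqref{eq:kp} differentiated in $x$ becomes \eqref{BSQ}. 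Your route is more economical — it reuses the already-proven KP theorem and never touches the third-order spectral problem — and it is logically complete provided you note (as you do, though only in passing) that the uniqueness hypothesis for the homogeneous form of \eqref{integ-BSQ} transfers to the homogeneous KP equation with the reduced measure: any homogeneous solution $\chi(l)$ of the latter at fixed $t$ gives $\chi(l)e^{2\wp'(l)t}$ as a homogeneous solution of the former, so triviality is indeed inherited. What the paper's Lax-pair route buys in exchange is the explicit Boussinesq spectral problem \eqref{BSQ-lax}, which is then reused (in its $b_0$-extended form \eqref{ex-BSQ-lax}) when setting up the Marchenko equation in Section \ref{sec-3-3-2}; your argument does not produce that object.
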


This theorem can also be proved by verifying $\psi(x,y;k)$ and $u(x,y)$
satisfy the Lax pair of the  Boussinesq equation:
 \begin{subequations}
\begin{align}
	& P \psi(x,y;k)  = \Big(\partial_y-\partial_x^2-u(x,y)\Big)\psi(x,y;k)=0,  \\
	& M \psi(x,y;k)  =  \Big(\partial_x^3+\frac{3}{2}u(x,y)\partial_x+\frac{3}{4}u_x(x,y)
 +\frac{3}{4}\partial_x^{-1}u_y(x,y)-\frac{1}{2}\wp'(k)\Big)\psi(x,y;k)=0.
\label{BSQ-lax}	
\end{align}
\end{subequations}
Note that here \eqref{BSQ-lax} serves as a third-order spectral problem for the Boussinesq equation.

\subsubsection{Marchenko equation}\label{sec-3-3-2}

For the Boussinesq equation, to obtain real-valued elliptic soliton solutions,
we should require all three elliptic cube roots to be real, which means that $\wp'(z)=\wp'(k)$
should have three distinct real roots within the real period for a given $k\in \mathbb{R}$.
However, according to Proposition \ref{prop-A.3}, in the case where $\Delta>0$,
there are no distinct real roots for $\wp'(z)=\wp'(k)$ in $(0, 2w]$.
We exclude the case  $\Delta<0$ where distinct real roots exist,
because prefer the formulation of $w$ being real and $w'$ purely imaginary
so that we can shift the singularity to the  complex plane.
Thus, we need to  modify some settings to reduce the Marchenko equation to the Boussinesq case.

In the following, we consider an extended Boussinesq equation
which is motivated by some lattice Boussinesq equations with parameters \cite{11-Hie-BSQ,12ZZN}
(similar to the continuous case in \cite{86-Hirota-PhyD}),
where the PWFs are defined using the  roots of (see \cite{11-HZ-SIGMA,12ZZN})
\[G(z,k)=z^3-k^3+b_0(z^2-k^2)=0,\]
which allows distinct real roots for suitable parameter $b_0$
and leads to non-singular real soliton solutions for the lattice Boussinesq equation \cite{11-HZ-SIGMA}.
In the elliptic soliton case, we start from the following KP equation with a real parameter $b_0$
\begin{equation}\label{KP-b0}
u_T+ b_0u_{Y}+u_{XXX}+6uu_X+3\partial_X^{-1}u_{YY}=0,
\end{equation}
which is connected with the KP equation \eqref{eq:kp} via  coordinate transformations
\begin{equation}
  x=X, \quad  y=Y-b_0T,   \quad  t=T.
\end{equation}
The Lax pair  \eqref{kp-lax}, integral equation \eqref{kp-integ} (as well as \eqref{integ-KP-move})
and solution \eqref{u:KP} (as well as \eqref{u:KP-1}) can be accordingly and easily rewritten out in terms of $(X,Y,T)$.

With selected parameters  $b_0$ and $ l$ in $\mathbb{R}$, the equation
\begin{align}\label{cubic-roots-b0}
2\wp'(z)+b_0\wp(z)= 2\wp'(l)+b_0\wp(l)
\end{align}
allows three distinct real roots $z=\omega_j(l)$, $j=0,1,2$ in $(0, 2w]$
when $g_2, g_3\in \mathbb{R}$ and $\Delta>0$ (see Appendix \ref{App-2-2}).
In the following we always set $\omega_j(l)=l$.
Choosing the measure
\begin{align}
\mathrm{d}\lambda(l,l')=\sum_{j=1}^2\delta(l'+\omega_j(l))\mathrm{d}\lambda_j(l)\mathrm{d}l',
\end{align}
we can  reduce  the $(X,Y,T)$-counterpart of \eqref{integ-KP-move} into an  integral equation
\begin{subequations}\label{int-bsq}
\begin{align}\label{int-bsq-a}
\psi(X,Y;k)+\mu_k(Y)\sum_{j=1}^2\int_{L_j}\psi(X,Y;l)\gamma_{-\omega_j(l)}(Y)
\widetilde{\Psi}_{X}{(k,-\omega_j(l))}\mathrm{d}\lambda_j(l)
    =\mu_k(Y)\widetilde{\Psi}_{X}{(k)},
\end{align}
where
\begin{align}\label{PWF}
\mu_k(Y)=\mu_0(k)e^{\wp(k)Y},   \qquad
\gamma_{-\omega_j(l)}(Y)=\gamma_0(\omega_j(l)) e^{-\wp(\omega_j(l))Y}.
\end{align}
\end{subequations}
And from  \eqref{u:KP-1} we have
\begin{align}\label{u-bsq}
u(X,Y)=-2\wp(X+w')-2\partial_X\sum_{j=1}^2\int_{L_j} \psi(X,Y;l)
\gamma_{-\omega_j(l)}(Y)\widetilde{\Psi}_{X}(-\omega_j(l))\mathrm{d}\lambda_j(l).
\end{align}
Let us conclude the following.

\begin{theorem}\label{th:BSQ-move}
Assume $X, Y, k, b_0\in \mathbb{R}$, $\Delta>0$
and  $\omega_j(l),  j=1,2,$ are the distinct real roots of \eqref{cubic-roots-b0}.
For $\psi(X,Y;l)$ defined by \eqref{int-bsq},
the function $u(X,Y)$ defined by \eqref{u-bsq} provides a non-singular real elliptic soliton solution
for the extended Boussinesq equation
\begin{equation}\label{BSQ-b0}
(b_0u_{Y}+u_{XXX}+6uu_X)_X+3 u_{YY}=0.
\end{equation}
\end{theorem}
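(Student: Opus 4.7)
The plan is to mimic the reduction used for the KdV case in Theorem \ref{th:kdvDLA}, but now with a three-fold root structure, and to exploit the fact that the extended Boussinesq equation \eqref{BSQ-b0} is obtained from the KP equation \eqref{KP-b0} by imposing $u_T=0$ (after one $\partial_X$). So my strategy is: start from the $(X,Y,T)$-version of Theorem \ref{th-KP-1'}, reduce the double integral to a single integral via the cube-root measure, and then arrange the plane wave factors so that all $T$-dependence cancels, thereby leaving a solution of \eqref{BSQ-b0}.

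First, I would rewrite the KP elliptic DL scheme of Theorem \ref{th-KP-1'} under the change of variables $x=X$, $y=Y-b_0T$, $t=T$. Under this change, the PWFs become
\begin{equation*}
\rho_k(y,t)=\rho_0(k)\exp\!\bigl(\wp(k)Y-(2\wp'(k)+b_0\wp(k))T\bigr),\qquad
\gamma_{l'}(y,t)=\gamma_0(l')\exp\!\bigl(-\wp(l')Y+(b_0\wp(l')-2\wp'(l'))T\bigr),
\end{equation*}
and the solution $u(X,Y,T)=-2\wp(X+w')-2\partial_X\iint\!\psi\,\gamma_{l'}\,\widetilde\Psi_X(l')\,\mathrm d\lambda(l,l')$ solves \eqref{KP-b0}. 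Inserting the measure
$\mathrm d\lambda(l,l')=\sum_{j=1}^{2}\delta(l'+\omega_j(l))\,\mathrm d\lambda_j(l)\,\mathrm dl'$ collapses the $l'$-integration and, using $\wp(-\omega_j(l))=\wp(\omega_j(l))$, $\wp'(-\omega_j(l))=-\wp'(\omega_j(l))$, the product $\rho_k(y,t)\gamma_{-\omega_j(l)}(y,t)$ picks up the $T$-factor $\exp\!\bigl(-(2\wp'(k)+b_0\wp(k))T+(b_0\wp(\omega_j(l))+2\wp'(\omega_j(l)))T\bigr)$. By the defining equation \eqref{cubic-roots-b0} of the roots, evaluated along the diagonal produced by the integral equation, this factor is trivial, which is exactly the mechanism that allows $T$ to drop out.

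Second, following the KdV template of section \ref{sec-3-2-1}, I would factor $\psi(X,Y,T;k)=\phi(X,Y,T;k)\rho_k(y,t)$ and check that the transformed kernel $\rho_l(y,t)\gamma_{-\omega_j(l)}(y,t)\widetilde\Psi_X(k,-\omega_j(l))$ is indeed $T$-independent; then, by the assumed uniqueness of solutions to the homogeneous equation, $\phi$ itself can be taken to be $T$-independent. Absorbing the remaining $Y$-dependent exponentials into new PWFs $\mu_k(Y)$ and $\gamma_{-\omega_j(l)}(Y)$ exactly as in \eqref{PWF} recovers the integral equation \eqref{int-bsq-a} and formula \eqref{u-bsq} for $\psi(X,Y;k)$ and $u(X,Y)$. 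Since $u$ now depends only on $(X,Y)$ and still satisfies the full $(X,Y,T)$-form of \eqref{KP-b0}, one has $\partial_T u=0$, so applying $\partial_X$ to \eqref{KP-b0} yields precisely \eqref{BSQ-b0}.

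Third, I would deal with non-singularity and reality. Non-singularity follows exactly as in section \ref{sec-2-3}: the shift $X\mapsto X+w'$ built into $\widetilde\Psi_X$ and $\widetilde\Psi_X(k,-\omega_j(l))$ moves the zeros of $\sigma$ off the real axis, so the denominators in the resulting Cauchy-matrix representation never vanish for real $X$. Reality then reduces to showing that every building block is real when $X,Y,k,b_0,g_2,g_3\in\mathbb R$, $\Delta>0$. The assumption on $b_0$ guarantees (via Appendix \ref{App-2-2}) that all three roots $\omega_j(l)$ of \eqref{cubic-roots-b0} are real in $(0,2w]$; combined with Proposition \ref{prop-2.1} applied to both $\widetilde\Psi_X(-\omega_j(l))$ and $\widetilde\Psi_X(k,-\omega_j(l))$, this forces both the kernel and the solution formula to be real-valued.

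The main obstacle I expect is the $T$-cancellation bookkeeping in the second step: one has to confirm carefully that after inserting the cube-root measure, the identity \eqref{cubic-roots-b0} is enough to eliminate $T$ not just from each kernel entry in isolation but consistently from the whole integral equation and from the formula for $u$, and that uniqueness of the homogeneous equation transfers from the KP scheme to the $Y$-dependent scheme \eqref{int-bsq-a}. Once this is settled, the Boussinesq identity \eqref{BSQ-b0}, non-singularity, and reality all follow cleanly from results already established in Theorem \ref{th-KP-1'}, Proposition \ref{prop-2.1}, and Appendix \ref{App-2-2}.
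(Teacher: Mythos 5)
Your proposal is correct and follows essentially the same route as the paper: the authors likewise obtain Theorem \ref{th:BSQ-move} by reducing the $(X,Y,T)$-form of the scheme in Theorem \ref{th-KP-1'} with the cube-root measure built from \eqref{cubic-roots-b0}, mirroring the KdV reduction of Section \ref{sec-3-2-1} (with the Lax pair \eqref{ex-BSQ-lax} mentioned only as an alternative check), and they invoke Proposition \ref{prop-2.1} and Appendix \ref{App-2-2} for reality and non-singularity exactly as you do. Your identification of the $T$-cancellation via $2\wp'(\omega_j(l))+b_0\wp(\omega_j(l))=2\wp'(l)+b_0\wp(l)$ acting on the product $\rho_l\gamma_{-\omega_j(l)}$ is precisely the mechanism the paper relies on.
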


The term $u_{XY}$ can be  transformed into $u_{XX}$ via $\partial_Y\to\partial_Y-\frac{1}{6}b_0\partial_X$, which gives rise to the standard Boussinesq equation with a mass term
\begin{equation}
-\frac{1}{12}b_0^2u_{XX}+u_{XXXX}+6u_X^2+6uu_{XX}+3u_{YY}=0.
\end{equation}
The Theorem \ref{th:BSQ-move} can also be proven with the aid of the Lax pair
 \begin{subequations}
\begin{align}
	P \psi(X,Y;k)  = & \Big(\partial_Y-\partial_X^2-u(X,Y)\Big)\psi(X,Y;k)=0,  \\
	M \psi(X,Y;k)  = &
 \Big(\partial_X^3+\frac{3}{2}u(X,Y)\partial_X+\frac{3}{4}u_X(X,Y)+\frac{3}{4}\partial_X^{-1}u_Y(X,Y) \nonumber\\
 &-\frac{1}{2}\wp'(k)-\frac{1}{4}b_0\wp(k)+\frac{1}{4}b_0\partial_Y\Big)\psi(X,Y;k)=0.
\label{ex-BSQ-lax}	
\end{align}
\end{subequations}
To have a Marchenko equation for the extended  Boussinesq equation \eqref{BSQ-b0},  we introduce
\begin{subequations}\label{KF-bsq-1}
\begin{align}
K_i(X,s,Y)&= -\int_{L_i}\psi(X,Y;l)\widetilde{\Psi}_{s}(-\omega_i(l))
\gamma_{-\omega_i(l)}(Y)\mathrm{d}\lambda_i(l),\\
F_i(X,s,Y)&= \int_{L_i}\widetilde{\Psi}_{X}(k)\mu_k(Y)\widetilde{\Psi}_{s}(-\omega_i(k))
\gamma_{-\omega_i(k)}(Y) \mathrm{d}\lambda_i(k).
\end{align}
\end{subequations}
Multiplying both sides of the integral equation \eqref{int-bsq} by
$\widetilde{\Psi}_{s}(-\omega_i(k))\gamma_{-\omega_i(k)}(Y)$ and integrating along the path $L_i$,
similar to   the KdV case (see Sec.\ref{sec-3-2-2}),
with  the help of the  valid integral
\begin{equation}
\int_x^{+\infty}\widetilde{ \Psi}_{\xi}(k)\widetilde{ \Psi}_{\xi}(-\omega_i(l))\mathrm{d}\xi
=\widetilde{ \Psi}_x(k,-\omega_i(l)),
\end{equation}
we can get
\begin{equation}
K_i(X,s,Y)+F_i(X,s,Y)+\int_x^{\infty}F_i(\xi,s,Y)\left(K_1(X,\xi,Y)+K_2(X,\xi,Y)\right)\mathrm{d}\xi=0
\end{equation}
for $i=1,2$.
Next, setting
$$K(X,s,Y)=\sum_{i=1}^2K_i(X,s,Y),~~~  F(X,s,Y)=\sum_{i=1}^2F_i(X,s,Y),$$
we arrive at a Marchenko equation, i.e.
\begin{equation}
K(X,s,Y)+F(X,s,Y)+\int_x^{\infty}K(X,\xi,Y)F(\xi,s,Y)\mathrm{d}\xi=0,
\end{equation}
for the extended Boussinesq equation \eqref{BSQ-b0},
and its solution can be given by
\begin{equation}
u(X,Y)=-2\wp(X+w')+2 \partial_X K(X,X,Y).
\end{equation}

In some special cases, one can get real solutions from the above Marchenko equation.
However, we prefer to derive solutions directly from the integral equation \eqref{int-bsq}.
We now proceed to the next step for real elliptic soliton solutions.

\subsubsection{Real-valued  elliptic solitons}\label{sec-3-3-3}

To derive  elliptic soliton solutions of the extended Boussinesq equation \eqref{BSQ-b0},  we choose a measure
in \eqref{int-bsq-a} as
\begin{equation}\label{measure-bsq}
\mathrm{d}\lambda_j(l)=\frac{1}{2\pi \mathrm{i}}\sum_{\beta=1}^{N}\frac{\mathrm{d}l}{l-k_\beta},~~j=1,2,
\end{equation}
and  $L_j$ being the contour such that $\{k_i\}$ belong to the domain circled by $L_j$.
Then the integral equation \eqref{int-bsq-a} gives rise to a linear equation system
\begin{equation}
(\mathbf{I+M})\mathbf{u}^{(0)}=\mathbf{r},
\end{equation}
where
\begin{equation*}
\begin{split}
&\mf{M}=\left(M_{\alpha,\beta}\right)_{N\times N},    \quad
M_{\alpha,\beta}=\mu_{k_{\alpha}}(Y)\sum_{j=1}^2\widetilde{\Psi}_{X}(k_{\alpha},-\omega_j(k_{\beta}))
\gamma_{-\omega_j(k_{\beta})}(Y),
~~\alpha,\beta=1,\cdots, N,\\
& \mathbf{u}^{(0)}=(\psi(X,Y;k_{1}),~\psi(X,Y;k_{2}),~\cdots,~\psi(X,Y;k_{N}))^T,\\
&\mathbf{r}=(\mu_{k_{1}}(Y)\widetilde{\Psi}_X(k_{1}),~\mu_{k_{2}}(Y)\widetilde{\Psi}_X(k_{2}),~\cdots,  ~
\mu_{k_{N}}(Y)\widetilde{\Psi}_X(k_{N}))^T,
\end{split}
\end{equation*}
and the solution \eqref{u-bsq} is presented as
\begin{equation}
u(X,Y)=-2\wp(X+w')-2(\mf{s}^T (\mathbf{I+M})^{-1}\mathbf{r})_X
=2\partial_X^2\ln \big(\sigma(X+w')\tau\big),
\end{equation}
where $\mathbf{r}$ is defined as above,
\begin{equation}
\mathbf{s}=\left(\sum_{j=1}^2\gamma_{-\omega_j(k_{1})}(Y)\widetilde{\Psi}_X(-\omega_j(k_{1})),
~\cdots, ~
\sum_{j=1}^2\gamma_{-\omega_j(k_{N})}(Y)\widetilde{\Psi}_X(-\omega_j(k_{N}))\right)^T,
\end{equation}
and the $\tau$ function is written as
\begin{equation}\label{tau-bsq}
\tau =|\mf{I+M}|.
\end{equation}
The explicit form of this $\tau$ can be expressed as
\begin{equation}
\tau=\sum_{J\subset S}\sum_{\nu_{J\alpha}}
\left[ \Bigg(\prod_{i<j \in J}A_{i,j}\Biggr) \times F_J \times
\Bigg(\prod_{i\in J} \sum_{\alpha=1}^2 \nu_{i\alpha}c_{i\alpha} \mathrm{exp}\big(\theta_{[e]}(X,Y;k_i,\omega_\alpha(k_i))\big)\Biggr)\right ],
\end{equation}
where $\nu_{i\alpha} \in \{0,1\}$, $c_{i\alpha}=\mu_0(k_i)\gamma_0(\omega_\alpha(k_i))$
for $i=1,2,\cdots,N$ and $\alpha=1,2$,
\begin{align*}
& \theta_{[e]}(X,Y;k_{i},\omega_\alpha(k_i))=
(-\zeta(k_{i})+\zeta(\omega_\alpha(k_{i})))X+(\wp(k_{i})-\wp(\omega_\alpha(k_{i})))Y
-\zeta(w')(k_{i}-\omega_\alpha(k_{i})),\\
& F_J=\frac{\sigma(X+w'+\sum_{i\in J}(k_i-\sum_{\alpha=1}^2\nu_{i\alpha}\omega_\alpha(k_i)))}
{\sigma(X+w')\prod_{i\in J}\sigma(k_i-\sum_{\alpha=1}^2\nu_{i\alpha}\omega_\alpha(k_i))},\\
&A_{i,j}=\frac{\sigma(k_i-k_j)\sigma(-\sum_{\alpha=1}^2\nu_{i\alpha}\omega_\alpha(k_i)+\sum_{\alpha=1}^2
\nu_{j\alpha}\omega_\alpha(k_j))}
{\sigma(k_i-\sum_{\alpha=1}^2\nu_{j\alpha}\omega_\alpha(k_j))\sigma(-\sum_{\alpha=1}^2
\nu_{i\alpha}\omega_\alpha(k_i)+k_j)},
\end{align*}
$\sum_{J\subset S}$ means the summation runs over all subsets $J$ of $S=\{1,2,\cdots,N\}$,
and the summation over $\nu_{J\alpha}$ is performed as a summation for all the cases
where $\nu_{i1}, \nu_{i2} \in \{0,1\}$ for $i\in J$ but subject to  $\nu_{i1}+\nu_{i2}=1$.

As an example, we take $g_2=4$, $g_3=1$, and $b_0=13$.
The nonsingular real-valued elliptic 1SS  for equation \eqref{BSQ-b0}  is provided by
\begin{equation}
 u_1(X,Y) =2\partial_X^2
\ln\Big(\sigma(X+w')+ \sum_{\alpha=1}^2  c_{1\alpha}\frac{\sigma(X+k_{1}-\omega_\alpha(k_{1})+w')}
  {\sigma(k_{1}-\omega_\alpha(k_{1}))}e^{\theta_{[e]}(X,Y;k_{1},\omega_\alpha(k_1))}\Big) .
  \label{1ss-bsq-1}
\end{equation}
This solution is depicted in Fig.\ref{fig:4}.
\begin{figure}[!h]
\centering
\begin{minipage}{5.5cm}
\includegraphics[width=\textwidth]{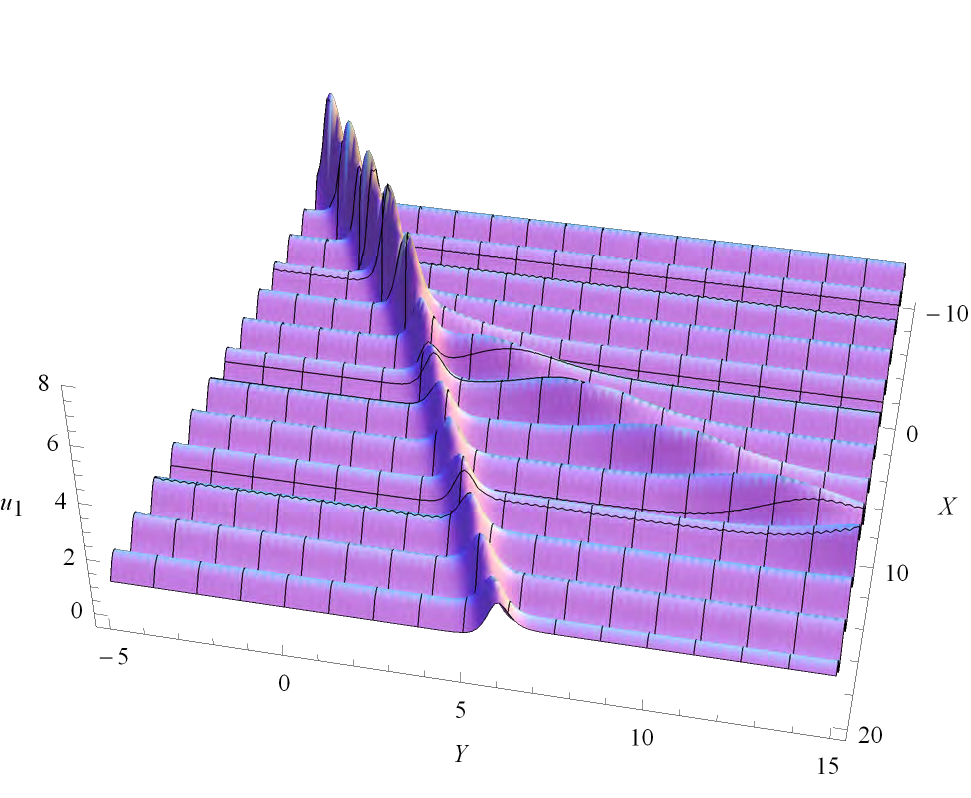}
\end{minipage}
\caption{Shape and motion of  elliptic 1SS   \eqref{1ss-bsq-1} for $c_{11}=c_{12}=1$, $k_1=0.7$,
$\omega_1(k_1)=0.37233792...$, $\omega_2(k_1)=1.37905146...$.}
\label{fig:4}
\end{figure}

Note that the solutions can also  involve only one elliptic cubic root,
for example, setting $\gamma_0(\omega_2(k_i))=0$.
In such a case, the corresponding elliptic 1SS and 2SS for the extended Boussinesq equation \eqref{BSQ-b0}
are given by
\begin{equation}\label{BSQ-sol}
u(X,Y) = 2 \partial_X^2 \ln\big(\sigma(X+w') \tau\big)
\end{equation}
where, respectively,
\begin{align}
\tau=1+ c_{11}  \frac{\sigma(X+k_{1}-\omega_1(k_{1})+w')}
  {\sigma(X+w')\sigma(k_{1}-\omega_1(k_{1}))}e^{\theta_{[e]}(X,Y;k_{1},\omega_1(k_1))}
\label{1ss-bsq}
\end{align}
and
\begin{align}
 &\tau= 1+c_{11}\frac{\sigma(X+k_{1}-\omega_1(k_{1})+w')}
{\sigma(X+w')\sigma(k_{11}-\omega_1(k_{1}))}e^{\theta_{[e]}(X,Y;k_{1},\omega_1(k_1))}\nonumber\\
&\quad +c_{21}\frac{\sigma(X+k_{2}-\omega_1(k_{2})+w')}
  {\sigma(X+w')\sigma(k_{2}-\omega_1(k_{2}))}e^{\theta_{[e]}(X,Y;k_{2},\omega_1(k_2))}
    +c_{11}c_{21}\frac{\sigma(k_{1}-k_{2})\sigma(\omega_1(k_{1})-\omega_1(k_{2}))}
  {\sigma(k_{1}-\omega_1(k_{2}))\sigma(\omega_1(k_{1})-k_{2})}\nonumber \\
&\quad \times
  \frac{\sigma(X+k_{1}-\omega_1(k_{1})+k_{2}-\omega_1(k_{2})+w')}
  {\sigma(X+w')\sigma(k_{1}-\omega_1(k_{1}))\sigma(k_{2}-\omega_1(k_{2}))}
  e^{\theta_{[e]}(X,Y;k_{1},\omega_1(k_1))+\theta_{[e]}(X,Y;k_{2},\omega_1(k_2))}.
\label{2ss-bsq}
\end{align}
The illustrations are presented   in Fig.\ref{fig:5}.

\begin{figure}[!h]
\centering
\subfigure[]{
\begin{minipage}{5.5cm}
\includegraphics[width=\textwidth]{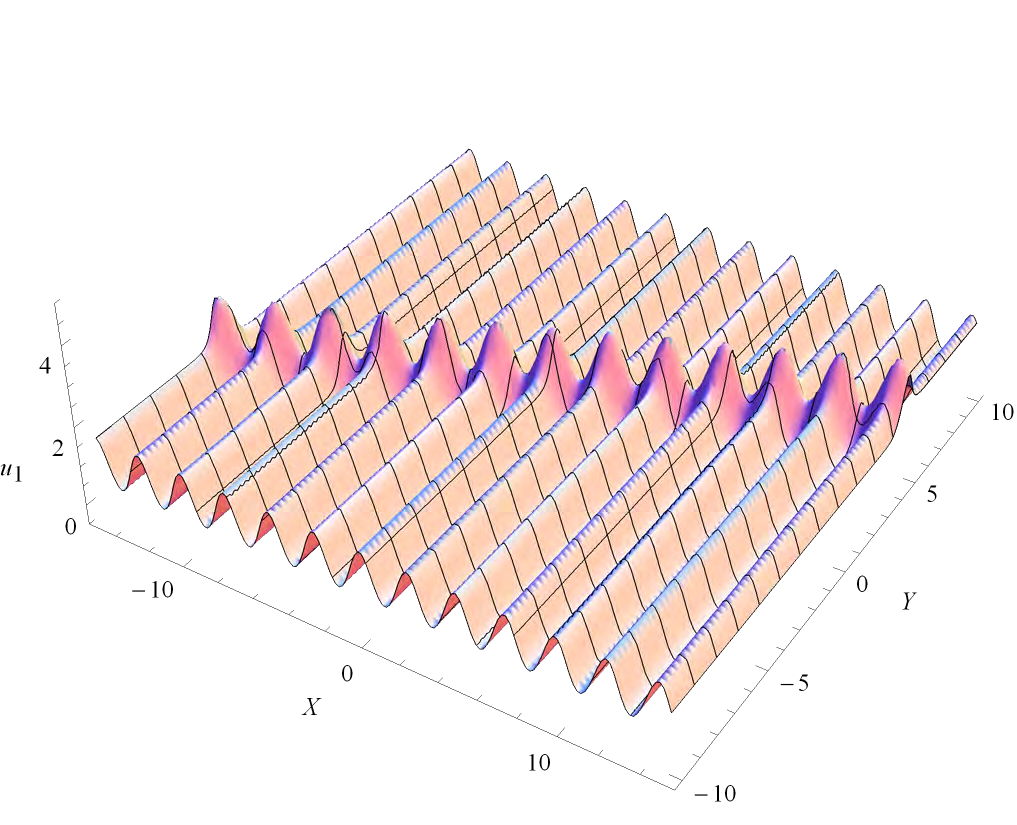}
\end{minipage}
}
\hspace{5mm}
\subfigure[]{
\begin{minipage}{5.4cm}
\includegraphics[width=\textwidth]{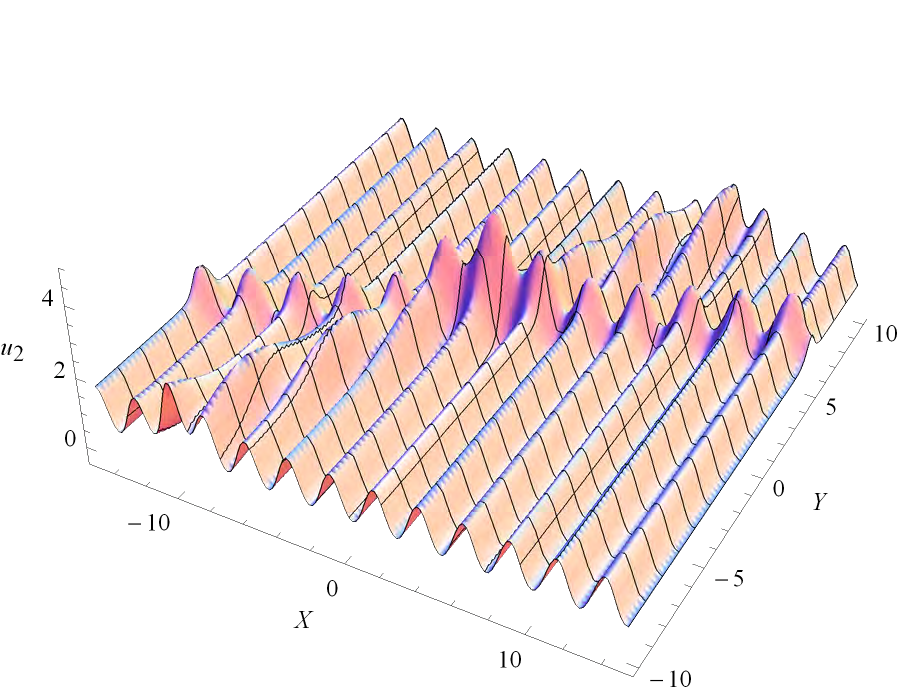}
\end{minipage}}
\caption{ Shape and motion of elliptic soliton solutions of the extended Boussinesq equation \eqref{BSQ-b0}
with $b_0=13$.
(a) 1SS \eqref{BSQ-sol} with \eqref{1ss-bsq} for  $c_{11}=1$, $k_{1}=0.7$, $\omega_1(k_{1})=0.3723379...$.
(b) 2SS \eqref{BSQ-sol} with \eqref{2ss-bsq}  $c_{11}=c_{21}=1$, $k_{1}=0.65$,
$\omega_1(k_{1})=0.3817591...$, $k_{2}=1.4$, $\omega_2(k_{2})=0.6746367...$.}
\label{fig:5}
\end{figure}

\section{Conclusions}\label{sec-4}

In this paper, we have established the elliptic DL schemes and Marchenko equations for the KP,
KdV and Boussinesq equations.
Non-singular real elliptic soliton solutions for these equations were obtained.
The DL scheme of the KP equation consists of an integral equation (e.g.\eqref{kp-integ})
and a formula of solution (e.g.\eqref{u:KP}).
This is confirmed by checking the Lax pair.
Since the Marchenko equation involves integration on the real axis,
we have discussed real-valuedness of the Weierstrass functions in Appendix \ref{App-1}.
Based on these discussions, we can choose parameters $k$ and $l'$
such that the integration \eqref{eq:kp-GLM-condit} on $(x,\infty)$ makes sense.
Then we were able to construct a  Marchenko equation for elliptic soliton solutions (see \eqref{eq:GLM-KP}).
To have non-singular real solutions, in addition to utilizing real-valuedness of the Weierstrass functions,
we also employed a trick, $x\to x+w'$, to shift zeros of $\sigma(z)$ from the real axis to the half complex plane.
As a result, we obtained $\tau$ function for non-singular real elliptic solitons of the KP equation.
The illustration shows line solitons on a periodic background.
We also obtained resonant Y-shape elliptic line solitons.

The concept of elliptic $N$th  roots of unity can be used to reduce the elliptic DL scheme and $\tau$ function
of the KP equation to those of the KdV and Boussinesq equations.
However, in this paper, we have only obtained formal Marchenko equations for the KdV and Boussinesq equations.
This is because, for the KdV equation,
we must consider simultaneously the convergence of the integral \eqref{eq:kp-GLM-condit-1}
and the significance of the contour integrals in \eqref{KF-kdv}.
Same problems are encountered for the Boussinesq equation.
Note that if we are only concerned with elliptic solitons,
which correspond to discrete eigenvalues, the Marchenko equations can always be obtained,
and they are in the same form as for the usual solitons.

With respect to constructing elliptic soliton solutions, the DL approach allows more flexibility,
compared with using the Marchenko equation (cf.\cite{74KM} for the KdV).
One may also try other measures to get solutions of other types,
e.g. multiple-pole type elliptic solitons, which have not been investigated before.
In addition, the DL scheme is closely related to the Cauchy matrix structure of soliton solutions.
For usual soliton solutions, such structures are formulated by a Sylvester equation
together with dispersion relations (corresponding to the equations under investigation,
see e.g. \cite{13ZZ,14-XuZhangZhao}).
For elliptic soliton solutions, it seems there is no Sylvester type equations available  (see also \cite{IMRN2010}).
Note that the Cauchy matrix approach has been well developed for the usual soliton case,
e.g. \cite{09NAH,13ZZ,14-XuZhangZhao,17-SunZhangNijhoff,23-LQZ}.
In particular, it allows explicit multiple pole solutions by solving the Sylvester equation.
How to get multiple-pole elliptic solitons in terms of elliptic Cauchy matrix is also an interesting problem
for consideration.

\vskip 20pt
\subsection*{Acknowledgments}
This project is supported by the NSF of China (Nos.12271334, 12411540016, 12301309, 12471235).

\appendix

\section{Real-valuedness of the Weierstrass functions}\label{App-1}

For this part we mainly refer to  \cite{book-Pastras,book-AbraSte}.

The $\wp(z)$  and its derivative compose points on the elliptic curve \eqref{ell-cur}, i.e.
\begin{equation}\label{ell-cur}
(\wp'(z))^2=4\wp^3(z)-g_2\wp(z)-g_3=4(\wp(z)-e_1)(\wp(z)-e_2)(\wp(z)-e_3),
\end{equation}
where  $e_i=\wp(w_i), i=1,2,3, $ and $w_1=w$, $w_2=w+w'$, $w_3=w'$.
Non-zero discriminant $\Delta=g_2^3-27g_3^2$ ensures the smoothness of the curve.
The other two Weierstrass functions, $\zeta(z)$ and $\sigma(z)$, are  quasi-periodic, see \eqref{periodicity}.

In the following we consider the case where the moduli $g_2$ and $g_3$ are real.
First, we have the following.

\begin{Proposition}\label{prop-A.1}
    When the moduli $g_2$, $g_3$ are real, $\sigma(z)$,  $\zeta(z)$ and $\wp(z)$ take real values on the real axis.
\end{Proposition}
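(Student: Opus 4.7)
The plan is to prove the proposition via the Schwarz reflection principle applied to the Laurent and Taylor expansions of the three Weierstrass functions about the origin. The first step is to recall the standard local expansions
\begin{align*}
\wp(z) &= \frac{1}{z^2} + \sum_{k \geq 2} c_k\, z^{2k-2},\\
\zeta(z) &= \frac{1}{z} - \sum_{k \geq 2} \frac{c_k}{2k-1}\, z^{2k-1},\\
\sigma(z) &= z\,\exp\Bigl(-\sum_{k \geq 2}\frac{c_k}{2k(2k-1)}\, z^{2k}\Bigr),
\end{align*}
where the coefficients $c_k$ are obtained by substituting the first series into the defining ODE $(\wp')^2 = 4\wp^3 - g_2 \wp - g_3$ together with the seeds $c_2 = g_2/20$ and $c_3 = g_3/28$, and the second and third series then follow from $\zeta' = -\wp$ and $\sigma'/\sigma = \zeta$ combined with the normalizations $\zeta(z) - 1/z \to 0$ and $\sigma(z)/z \to 1$ as $z \to 0$. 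A routine induction on $k$, based on the once-differentiated ODE $\wp''(z) = 6\wp^2(z) - g_2/2$, shows that each $c_k$ is a polynomial in $g_2, g_3$ with rational coefficients.

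Next, under the hypothesis $g_2, g_3 \in \mathbb{R}$, every $c_k$ is real, so all coefficients appearing in the three expansions above are real. I would then define $\wp^{\ast}(z) := \overline{\wp(\bar z)}$, which is meromorphic wherever $\wp$ is. Conjugating the Laurent series term by term and using reality of the coefficients yields $\wp^{\ast}(z) = \wp(z)$ on a punctured neighborhood of the origin, and the identity theorem for meromorphic functions extends this equality to the whole plane off the period lattice. The same argument applied to $\zeta$ and $\sigma$ gives $\overline{\zeta(\bar z)} = \zeta(z)$ and $\overline{\sigma(\bar z)} = \sigma(z)$.

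Specializing each of these three Schwarz reflection identities to $z \in \mathbb{R}$ (where $\bar z = z$) collapses them to $\overline{f(z)} = f(z)$, proving $f(z) \in \mathbb{R}$. For the entire function $\sigma$ this gives a real value at every real $z$, while for $\wp$ and $\zeta$ it gives a real value at every real $z$ that is not a real pole of the function, which is the sharpest statement one can hope for.

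I do not expect any conceptual obstacle here; the only nontrivial piece of bookkeeping is verifying that the ODE recursion indeed produces polynomial $c_k$ with rational coefficients in $g_2, g_3$, which is classical (see \cite{book-Pastras,book-AbraSte}). An alternative route would be to observe that any real choice of $(g_2, g_3)$ with $\Delta \neq 0$ admits a period lattice invariant under complex conjugation and then conjugate the defining lattice sums or products for $\wp, \zeta, \sigma$ directly; I prefer the power-series route since it is uniform in the sign of $\Delta$ and avoids a case-by-case analysis of the half-periods $w, w'$.
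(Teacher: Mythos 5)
Your proof is correct and rests on exactly the same key identity $\overline{f(z)}=f(\overline{z})$ that the paper uses; the only difference is that the paper cites this reflection property directly from Abramowitz--Stegun (p.~631), whereas you derive it yourself from the reality of the Laurent/Taylor coefficients together with the identity theorem. Both arguments then conclude identically by specializing to $z\in\mathbb{R}$.
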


\begin{proof}
In fact, when $g_2,g_3\in \mathbb{R}$,  the three Weierstrass functions satisfy the following property
(see page 631 of \cite{book-AbraSte}):
\begin{equation}
\overline{f(z)}=f(\overline{z}),~~ z\in \mathbb{C},
\end{equation}
where `bar' denotes complex conjugate, and $f$ can be either $\wp$,  $\zeta$ or $\sigma$.
When $z\in \mathbb{R}$, we immediately prove the proposition.

\end{proof}

For the real moduli $g_2$ and $g_3$, the non-degenerate discriminant $\Delta$
can be either  positive or negative, which determines different properties of the
Weierstrass functions. We collect some of them in the following Proposition.

\begin{Proposition}\label{prop-A.2}
We list some elements and their values when the moduli $g_2$ and $g_3$ are real.
They rely on $\Delta$.

\begin{table}[H]
	\centering
\begin{tabular}{p{2.7cm}|p{5.8cm}|p{5.5cm}}
\hline
~         &positive discriminant $\Delta>0  $         & negative discriminant $\Delta<0$   \\	
\hline	
 (1) Fundamental
 half periods       &$w_1$ is real, $w_3$ is  purely imaginary.
                &  $w_1=a+\mathrm{i}b$  and $w_3=a-\mathrm{i}b$ ($a>0$, $b>0$). \\
\hline
(2) Values at half-periods
        &   $e_1>0\geq e_2>e_3$;
       $\zeta(w_1)>0$,  $\mathrm{i}\zeta(w_3) \in \mathbb{R}$;
        $e_1^2>\frac{g_2}{12}$, $e_3^2>\frac{g_2}{12}$.
           \newline
          ($g_3\geq 0$,  equality when $g_3=0$).
                                 & $e_2\geq 0$, $e_1 =-\alpha+\mathrm{i}\beta$,
                                   $e_3=\overline{e_1}$, ($\alpha\geq 0$, $\beta>0$);
                                   $\zeta(w_2)>0$,    $\mathrm{i}\zeta(w_3-w_1)  \in \mathbb{R}$;
                                    $e_2^2 >\frac{g_2}{12}$;
                                       \newline
                                     ($g_3\geq 0$, equality when $g_3=0$). \\
\hline
(3) Real-value of $ \wp(z)$
(modulo the period lattice)
&  $z\in (0,2w_1)$, $\wp(z)\geq e_1$;  \newline
 $z\in (0,2w_3)$, $\wp(z)\leq e_3$; \newline
     $z\in [w_1,w_1+2w_3]$, $e_2\leq\wp(z)\leq e_1$;
       $z\in [w_3,w_3+2w_1]$, $e_3\leq\wp(z)\leq e_2$.
                           & $\wp(z)$ takes real values only on the real and                                                                                                                                      imaginary axes. \newline
                               $z\in (0,2w_2)$, $\wp(z)\geq e_2$;  \newline
                                 $z\in (0,2w_1-2w_3)$, $\wp(z)\leq e_2$. \\
\hline
(4) Real-value of $\Psi_z(a)$
                     &$z\in \mathbb{R}$ and $a\in (0,2w_1)$, $(w_3,w_3+2w_1)$, (modulo the period lattice).
                                           & $z\in \mathbb{R}$ and $a\in (0,2w_2)$, (modulo the period lattice). \\
\hline
\end{tabular} 		
\end{table}
We also note that in the above table $(c,d)$ (or $[c,d]$) for $c,d\in \mathbb{C}$ means the segment
on the complex plane from point $c$ to $d$, rather than an interval on the real axis.
\end{Proposition}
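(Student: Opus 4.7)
The plan is to treat Proposition \ref{prop-A.2} as a consolidation of classical facts about $\wp,\zeta,\sigma$ in the two regimes $\Delta>0$ and $\Delta<0$, so the proof would proceed item by item, leaning heavily on the conjugation symmetry from Proposition \ref{prop-A.1} and on standard normalizations of the fundamental half-periods. The starting point is the cubic \eqref{ell-cur}: when $g_2,g_3\in\mathbb{R}$, the sign of $\Delta$ dictates whether the three roots $e_1,e_2,e_3$ of $4X^3-g_2X-g_3=0$ are all real and distinct ($\Delta>0$) or one real and a conjugate pair ($\Delta<0$). I would normalize the labeling exactly as in the paper so that, in the $\Delta>0$ case, $e_1>e_2>e_3$ are real, and in the $\Delta<0$ case, $e_2\in\mathbb{R}$ with $e_1=\overline{e_3}$.

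For item (1), I would use the classical integral representation $w_i=\int_{e_i}^{\infty}\frac{dX}{\sqrt{4X^3-g_2 X-g_3}}$ and deduce reality/pure-imaginarity of $w_1,w_3$ directly from the real or imaginary character of the integrand along appropriate contours. For $\Delta<0$, I would use the same integral but now with $e_1,e_3$ a conjugate pair, from which $w_1=\overline{w_3}$ and the decomposition $w_1=a+ib$, $w_3=a-ib$ with $a,b>0$ follows by symmetry and a positivity check of the integral. For item (2), the first identities among $e_i$ come from Vieta on $4X^3-g_2X-g_3$, which yields $e_1+e_2+e_3=0$, $e_1 e_2+e_1 e_3+e_2 e_3=-g_2/4$ and $e_1 e_2 e_3 = g_3/4$; the inequalities $e_1>0\geq e_2>e_3$ (and $e_2\geq 0$ in the $\Delta<0$ case) with the stated equalities at $g_3=0$ then follow by standard sign analysis of the cubic. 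The reality statements on $\zeta(w_1)$, $\zeta(w_2)$ and pure-imaginarity of $\zeta(w_3)$, $\zeta(w_3-w_1)$ drop out of the conjugation symmetry $\overline{\zeta(z)}=\zeta(\bar z)$ combined with the values of the half-periods from (1). The bound $e_i^2>g_2/12$ I would get from Vieta applied to the depressed cubic (equivalently, from $3e_i^2-g_2/4=$ discriminant-like expression in the remaining two roots).

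For item (3), the strategy is to examine $\wp$ on each segment of the boundary of the half-period parallelogram. Using $\overline{\wp(z)}=\wp(\bar z)$ together with double periodicity and the evenness of $\wp$, one sees that $\wp$ is real on the real axis, on the imaginary axis (when $w_3\in\mathrm{i}\mathbb{R}$), and on the horizontal and vertical lines through the half-periods. Monotonicity on each such segment follows from $\wp'$ vanishing only at $w_1,w_2,w_3$ modulo the period lattice; combined with the already-established values $\wp(w_i)=e_i$, this forces the stated ranges. For the $\Delta<0$ case, the analogous analysis is performed along the real and imaginary axes using the modified half-period structure and the fact that $w_2\in\mathbb{R}$. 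Item (4) then follows by substitution: $\Psi_z(a)=\sigma(z+a)/(\sigma(z)\sigma(a))\cdot e^{-\zeta(a)z}$ is real provided every factor is real, which by Proposition \ref{prop-A.1} and the quasi-periodicity \eqref{periodicity-sig} reduces to choosing $a$ in a segment on which $\zeta(a)$ is real modulo a period, i.e. precisely the ranges listed.

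The main obstacle, I expect, will be the careful bookkeeping of the quasi-periodic phase factors in $\sigma$ and $\zeta$ when shifting $a$ across the segments $(w_3,w_3+2w_1)$ and $(0,2w_2)$: one must check that the extra exponentials produced by \eqref{periodicity-sig} combine with $e^{-\zeta(a)z}$ to give a real quantity for $z\in\mathbb{R}$, which uses Legendre's relation $w_1\zeta(w_3)-w_3\zeta(w_1)=\tfrac{\pi\mathrm{i}}{2}$ in a nontrivial way. Everything else is essentially a tabulated computation or a citation to Chapter 18 of \cite{book-AbraSte} and the treatment in \cite{book-Pastras}; since the proposition explicitly collects known facts, my intended write-up would present the framework, carry out the quasi-periodicity bookkeeping in full, and refer to the two monographs for the remaining routine verifications.
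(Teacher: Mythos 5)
Your proposal is essentially correct but takes a genuinely different route from the paper. The paper's proof of this proposition is almost entirely a citation: items (1) and (2) are referred to pages 630 and 633 of \cite{book-AbraSte}, and items (3) and (4) to pages 34 and 59 of \cite{book-Pastras}; the only original work in the paper is the reduction of the case $g_3<0$ to the tabulated case $g_3\geq 0$ via the relations $-\wp(z;g_2,g_3)=\wp(\mathrm{i}z;g_2,-g_3)$ and $-\mathrm{i}\zeta(z;g_2,g_3)=\zeta(\mathrm{i}z;g_2,-g_3)$, together with the corresponding relabelling $\widetilde{w}_1=-\mathrm{i}w_3$, $\widetilde{w}_3=\mathrm{i}w_1$ (for $\Delta>0$) and $\widetilde{w}_1=\mathrm{i}w_3$, $\widetilde{w}_3=-\mathrm{i}w_1$ (for $\Delta<0$). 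You instead reconstruct the classical facts from first principles: integral representations of the half-periods for (1), Vieta plus the identity $\wp''(w_i)=2\prod_{j\neq i}(e_i-e_j)$ for the inequalities in (2), monotonicity of $\wp$ on the boundary of the half-period parallelogram for (3), and the quasi-periodicity bookkeeping with Legendre's relation for (4). This buys a self-contained argument and correctly isolates the one genuinely delicate point (the phase factors in $\Psi_z(a)$ for $a$ on the shifted segment), at the cost of more work; the paper buys brevity at the cost of outsourcing everything except the $g_3<0$ reduction, which your write-up omits entirely. Two caveats on your sketch: conjugation symmetry alone gives only the \emph{reality} of $\zeta(w_1)$ and $\zeta(w_2)$, not the strict positivity asserted in item (2), so you need an extra positivity argument (or a citation) there; and in item (3) for $\Delta<0$ the claim that $\wp$ is real \emph{only} on the real and imaginary axes requires the counting argument that $\wp$ attains each value exactly twice per fundamental domain, which your segment-by-segment analysis does not by itself deliver.
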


\begin{proof}
(1) See  page 630 in \cite{book-AbraSte}.

(2) See page 633 in \cite{book-AbraSte}.
In the table we only list the case $g_3\geq 0$. The case  $g_3<0$  can be deduced by using the relations
  \begin{equation}
  -\wp(z;g_2,g_3)=\wp(\mathrm{i}z;g_2,-g_3), \qquad   -\mathrm{i}\zeta(z;g_2,g_3)=\zeta(\mathrm{i}z;g_2,-g_3).
  \end{equation}
When $g_3<0$, we  denote the half-periods of $\wp(z; g_2,g_3)$ by $\widetilde{w}_j \,(j=1,3)$,
and  $\tilde{e}_j=\wp(\tilde{w}_j;g_2,g_3)\,(j=1,2,3)$,
where $\tilde{w}_2=\tilde{w}_1+\tilde{w}_3$.
When $\Delta>0$,  one can take
$\tilde{w}_1=-\mathrm{i}w_3>0$, $\tilde{w}_3=\mathrm{i}w_1$ (see page 105 of \cite{book-Akhiezer}),
thus we have
\begin{align}
&\tilde{e}_1>\tilde{e}_2>0>\tilde{e}_3, \quad \tilde{e}^2_1>\frac{g_2}{12},
\quad \tilde{e}^2_3>\frac{g_2}{12},\quad
\zeta(\tilde{w}_1;g_2,g_3)\in \mathbb{R},  \quad  \mathrm{i} \zeta(\tilde{w}_3;g_2,g_3)>0.
\end{align}
When $\Delta<0$,  one may take $\tilde{w}_1=\mathrm{i}w_3$, $\tilde{w}_3=-\mathrm{i}w_1$,  thus we have
\begin{align}\label{A-5}
&
\tilde{e}_2<0, \quad \tilde{e}^2_2>\frac{g_2}{12}, \quad \zeta(\tilde{w}_2;g_2,g_3)\in \mathbb{R}, \quad
 \mathrm{i} \zeta(\tilde{w}_1-\tilde{w}_3;g_2,g_3)>0.
\end{align}

For (3) and (4), see pages 34 and 59 of  \cite{book-Pastras}, respectively.

\end{proof}

\section{Discussions on the real elliptic cube roots of unity}\label{App-2}

\subsection{Distinct real cube roots}\label{App-2-1}

Let us fisrt discuss the roots of
\begin{equation}\label{pzk}
 \wp'(z)=\wp'(k).
\end{equation}

\begin{Proposition}\label{prop-A.3}
Assume $g_2, g_3, k\in \mathbb{R}$.
We have the following two results for the real roots of \eqref{pzk}.
\\
(1) When $\Delta>0$,
equation \eqref{pzk} does not have distinct real roots within the real period.\\
(2) When $\Delta<0,   g_2>0,  g_3<0$,  equation \eqref{pzk} can have three distinct real roots
with suitable values  of the parameter $k$ in a fundamental domain.
\end{Proposition}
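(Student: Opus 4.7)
The plan is to analyze the function $\wp'$ restricted to the real-period interval in each case, using the sign of $\wp''(z)=6\wp(z)^2-g_2/2$ to determine monotonicity and count real preimages of the value $c=\wp'(k)$. In both cases, the three solutions of \eqref{pzk} in a fundamental parallelogram are closed under complex conjugation (since $g_2,g_3\in\mathbb{R}$ forces $\overline{\wp'(z)}=\wp'(\bar z)$) and sum to zero modulo the lattice (sum-of-zeros theorem applied to the elliptic function $\wp'(z)-\wp'(k)$, whose only pole in a period is a triple pole at the origin). These structural facts pin down the non-real roots as a complex conjugate pair whenever at most one root is real.

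For part~(1), with $\Delta>0$, Proposition~\ref{prop-A.2} gives $w=w_1\in\mathbb{R}_{>0}$ and $w'=w_3\in\mathrm{i}\mathbb{R}$, so the real period is $2w$ and $\wp(z)\geq e_1>0$ on $(0,2w)$. Using $e_1+e_2+e_3=0$ and the identity $e_1^2+e_2^2+e_3^2=g_2/2$, a short case analysis in the sign of $e_2$ yields $e_1^2>g_2/12$, so that $\wp''>0$ throughout $(0,2w)$. Hence $\wp'$ is strictly increasing from $-\infty$ to $+\infty$ on $(0,2w)$, and \eqref{pzk} has $z=k$ as its unique real root; the remaining two roots in the fundamental parallelogram then form a complex conjugate pair, which are never real.

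For part~(2), with $\Delta<0$, $g_2>0$, $g_3<0$, Proposition~\ref{prop-A.2} gives $w_1=a+\mathrm{i}b$, $w_3=a-\mathrm{i}b$, so $w_2=2a$ is real and the real period equals $4a$. From $g_3=4|e_1|^2 e_2<0$ one obtains $e_2<0$, and Proposition~\ref{prop-A.2} (through the $g_3\leftrightarrow-g_3$ symmetry used in its proof) yields $e_2^2>g_2/12$, so $e_2<-\sqrt{g_2/12}$. On $(0,2a)$, $\wp$ decreases monotonically from $+\infty$ to $e_2$ and therefore crosses both levels $\pm\sqrt{g_2/12}$ at unique points $z_1<z_2$; between them $\wp''<0$, outside them on $(0,2a)$ $\wp''>0$. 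So $\wp'$ has a local maximum $A=\wp'(z_1)$ and a local minimum $B=\wp'(z_2)$, and substituting $\wp(z_j)=\pm s$ with $s=\sqrt{g_2/12}$ into $(\wp')^2=4\wp^3-g_2\wp-g_3$ gives $A^2=-8s^3-g_3$ and $B^2=8s^3-g_3$; the hypotheses translate into $-g_3>8s^3$, making both strictly positive.

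The most delicate step, and the main obstacle, is pinning down the signs of $A$ and $B$. That $B<0$ is immediate since $\wp'$ rises from $B$ to $0$ on $(z_2,2a)$, but $A<0$ requires a zero-count argument. Proposition~\ref{prop-A.2}(1) places $w_1,w_3$ off the real axis, so $z=w_2=2a$ is the only real zero of $\wp'$ in $(0,4a)$, and these zeros are simple. If $A\geq 0$, $\wp'$ would attain $0$ somewhere on $(0,z_1)$ while rising from $-\infty$ to $A$, producing an extra real zero and contradicting this count. Hence $B<A<0$, and for any $k$ with $\wp'(k)=c\in(B,A)$ (which occurs for $k$ in suitable sub-intervals of $(0,2a)$), the value $c$ is attained exactly once on each of $(0,z_1)$, $(z_1,z_2)$ and $(z_2,2a)$. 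This delivers three distinct real roots of \eqref{pzk} within the real period, completing the proof.
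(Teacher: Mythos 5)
Your proof is correct and follows essentially the same route as the paper's: both parts hinge on the sign of $\wp''(z)=6\wp^2(z)-\tfrac{g_2}{2}$ via the bounds $\wp^2\geq e_1^2>\tfrac{g_2}{12}$ (resp. the failure of $\wp^2>\tfrac{g_2}{12}$ when $e_2<-\sqrt{g_2/12}$) drawn from Proposition \ref{prop-A.2}, with the three real roots in case (2) obtained by placing $\wp'(k)$ between the two critical values of $\wp'$. Your treatment is in fact slightly more complete than the paper's, since you explicitly verify that both critical values $A,B$ are negative (so the admissible window for $\wp'(k)$ is nonempty and yields a crossing on $(z_2,2a)$), a point the paper's condition $-\wp'(z_2)<|\wp'(k)|<-\wp'(z_1)$ leaves implicit.
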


\begin{proof}
(1) In fact, in case equation \eqref{pzk} having  distinct real roots within the real period $(0,2w_1]$,
$\wp''(z)$ will have at least  one real zero.
However, recalling $\wp''(z)=6\wp^2(z)-\frac{g_2}{2}$ and $\wp^2(z)\geq e_1^2>\frac{g_2}{12}$
when $\Delta>0$ and $z\in(0, 2w_1)$ (see Proposition \ref{prop-A.2}),
we know that $\wp''(z)>0$. This means there is no zeros for $\wp''(z)$ in $(0, 2w_1)$.\\
(2) In the case $\Delta<0$,   to ensure that $\wp''(z)=0$ has real roots, we only consider the case where $g_2>0$.
When $ g_3\geq0$,
according to the  Proposition \ref{prop-A.2},
  $\wp^2(z)\geq e_2^2>\frac{g_2}{12}$ always holds true within the real period.
 For the case $g_3<0$, from the equation \eqref{A-5}, it is known that $\tilde{e}_2<-\sqrt{\frac{g_2}{12}}$.
Due to $\lim_{z\to0^+}\wp(z)=\lim_{z\to0^-}\wp(z)=+\infty$,
it can be deduced that $\wp(z)=\pm\sqrt{\frac{g_2}{12}}$
has four distinct real roots within the real period  $(-\tilde{w}_2,\tilde{w}_2)$.
 In other words,  $\wp'(z)$ has four  extremum points, which can be denoted as
$-z_1, -z_2, z_2, z_1$ in ascending order within the real period.
Since $\lim_{z\to0^-}\wp'(z)=+\infty$, $\lim_{z\to0^+}\wp'(z)=-\infty$
and $\wp'(\tilde{w}_2)=\wp'(-\tilde{w}_2)=0$,
it can be deduced that by selecting the appropriate $k$ such that  $-\wp'(z_2)<|\wp'(k)|<-\wp'(z_1)$ holds,
and then equation \eqref{pzk} can have three distinct real roots.

\end{proof}

\subsection{Distinct real roots of equation \eqref{cubic-roots-b0}}\label{App-2-2}

We discuss the values of parameters  $b_0$ and $l$ such that the equation \eqref{cubic-roots-b0} has three different real roots.
Let us define $g(x)=2\wp'(x)+b_0\wp(x)$, then  the equation \eqref{cubic-roots-b0} can be expressed as $g(z)=g(l)$.
 Without loss of generality, we consider the case for $\Delta>0$, $g_3\geq0$ and $b_0>0$.
 Similar to the proof of Proposition \ref{prop-A.3},
the selection of  $b_0$ should ensure that $g(x)$ has two extremum points,
which also implies that $g'(x)$ has  two zeros.
When $x\in[w_1,2w_1)$, due to $\wp'(x)\geq0$ and Proposition \ref{prop-A.2},
we have $g'(x)=12\wp^2(x)-g_2+b_0\wp'(x)>0$.
Hence, the  zeros of $g'(x)$ belong to $(0,w_1)$.
Next, we determine the value of $b_0$ using the root $x_0$ of $g''(x)=0$, i.e.,
\begin{equation*}
b_0=-\frac{2\wp'''(x_0)}{\wp''(x_0)},
\end{equation*}
while also ensuring that $g'(x_0)<0$.
Through the series expansion of the aforementioned equation near the origin,
 it can be observed that as $x_0\to 0^{+}$, the value of $b_0$ gradually increases.
Taking $x_0=\frac{w_1}{2}$ as an example,
and with the aid of $\wp(\frac{w_1}{2})=e_1+H_1$, $\wp'(\frac{w_1}{2})=-2H_1\sqrt{2H_1+3e_1}$,
where $H_1=\sqrt{3e_1^2-\frac{g_2}{4}}$, we can obtain the value of $b_0$
\begin{equation*}
b_0
=12\frac{e_1+H_1}{\sqrt{2H_1+3e_1}},
\end{equation*}
and then we have
\begin{equation*}
  g'(x_0)=2\left(\wp''(x_0)-\frac{\wp'''(x_0)\wp'(x_0)}{\wp''(x_0)}\right)=-4H_1(4H_1+3e_1)<0.
\end{equation*}
Meanwhile, as $x_0\to 0^{+}$, $g'(x_0)<0$.
Therefore, when we  take  $-2\wp'''(\frac{w_1}{2})/\wp''(\frac{w_1}{2})\leq b_0\ll\infty$,
we can arrive at the  numerical values of the  two extremum points  of $g(x)$.
By ensuring that $g(l)$ is situated  between the local maximum and minimum values,
we can obtain three zeros  $\omega_0(l)=l$, $\omega_1(l)$ and $\omega_2(l)$ of   equation \eqref{cubic-roots-b0}.
The numerical results can be calculated using \emph{Mathematica}.

\section{On elliptic solitons}\label{App-3}

The authors are grateful to the referee for the comments on elliptic solutions and related references,
which inspire us to complete this brief review on elliptic solitons.

The investigation of ``elliptic solitons'' dates back to the research of Krichever \cite{K-FAA-1980}
on  elliptic solutions of the KP equation and the KdV equation.
He proposed a  criterion for such solutions that
the solutions should be elliptic in the spatial variable $x$.
For the KP equation, Krichever proved its elliptic solutions
should take the form (\cite{K-FAA-1980}, Theorem 4)
\begin{equation}\label{u-KP-ell}
u=\mathrm{const.}+\sum_{j=1}^n \wp(x-x_j(y,t)),
\end{equation}
where $\{x_j(y,t)\}$ obey some dynamical system of finite dimension.
He also showed that \eqref{u-KP-ell} can be expressed in terms of $\theta$-function (\cite{K-FAA-1980}, Corollary 3).
The reduction to the KdV equation recovers the 2-gap elliptic solution in \cite{DN-JETP-1974}
and the results of the  paper \cite{AMM-CPAM-1977}
that for the first time links the elliptic solutions of integrable equations
with the theory of finite-dimensional dynamic systems.

The term ``elliptic soliton'' is introduced by Verdier in \cite{V-AA-book-1988},
where he described solitons using the language of algebraic geometry.
In his definition, ``algebraic solitons'' are governed by some linear algebraic subgroup $G$,
while when $G$ is an elliptic curve, the solutions are called ``elliptic solitons''.
Verdier pointed out in \cite{V-AA-book-1988} that the elliptic solitons of the KP equation
must be of the form
\begin{equation}\label{u-KP-ell-Verdier}
u=\mathrm{const.}+2 \sum_{j=1}^n m_j\wp(x-x_j(y,t)),
\end{equation}
where $m_j$ are triangular numbers,
and some new elliptic source potentials (without time evolution) for the KdV equation were listed out in
 \cite{V-AA-book-1988}.
Verdier's research 
was elaborated later with his collaborator Treibich  in
\cite{T-DMJ-1989,TV-G-book-1990,TV-CRASP-1990,T-AAM-1994}
by using the tool of ``tangential cover''.
In particular, in \cite{TV-CRASP-1990} they presented the following elliptic source potential for the KdV equation:
\begin{equation}\label{u-TV}
u= \sum_{j=0}^3 a_j(a_j+1)\wp(x-x_j), ~~~~~ \sum_{j=0}^3a_j(a_j+1)=2n,
\end{equation}
where $n$ is the degree of the tangential cover.
This is more general than the Lam\'e-Ince potential \cite{I-PRSE-1940}
which reads $u= g(g+1)\wp(x), ~ g\in \mathbb{N}$.
The function \eqref{u-TV} is known as the Treibich-Verdier potential
and now the Darboux-Treibich-Verdier (DTV) potential.
It is Matveev and Smirnov \cite{MS-LMP-2006} who pointed out that
Darboux found the same potential in 1882 \cite{D-CRASP-1882}
but given in terms of Jacobi elliptic functions.
They also showed in \cite{MS-LMP-2006} that the Darboux equation is a canonical form the de Sparre equation
\cite{dS-AM-1883ab}.
There are other mathematicians and physicists joined the
research of the DTV potential, e.g. 
\cite{BE-FAA-1989,GW-MZ-1995,S-MN-1989,S-AAM-1994,S-CRMP-2002,T-CMP-2003,V-LMP-2011}.
Later, finite-gap elliptical  potentials of a more general form (e.g. \cite{S-LMP-2006}, Eq.(0.10))
of the  Schr\"odinger spectral problem
$
\psi_{xx}(x,\lambda)+u(x)(x,\lambda)=\lambda \psi(x,\lambda)
$
were found \cite{S-LMP-2006,T-RMS-2001}.
It is also remarkable that Gesztesy and Weikard \cite{GW-AM-1996}
investigated the Schr\"odinger spectral problem
by utilizing a theorem of Picard which
concerned with the existence of solutions
that are elliptic of the  $n$th-order ordinary differential equations
with elliptic coefficients.
They proved that all elliptic KdV algebro-geometry potentials
are equivalent to the Picard potentials \cite{GW-AM-1996}.
As examples they gave several elliptic potentials corresponding to genus $g=4,5$ 
(\cite{GW-AM-1996}, Eqs.(5.40) and (5.43); also \cite{GW-MZ-1995}, Eqs.(3.86)-(3.89)).
It is also worth mentioning that Krichever et.al. investigated the case of matrix KP equation \cite{KBBT-AMST-book-1995}
and the analogue research of \cite{T-DMJ-1989} for matrix case was developed by Treibich in \cite{T-DMJ-1997}.

In addition, Belokolos and Enol'skii  considered  reduction from the general finite-gap solution to the DTV potential
\cite{BE-FAA-1989}. Later, they \cite{BE-AAM-1994} described a technique on how  elliptic finite-gap potentials are
derived from the Matveev-Its formula for the finite-gap potential.
Besides, Enol'skii and Kostov \cite{EK-AAM-1994} discussed the  conditions on the moduli of the
corresponding algebraic curve under which the general finite-gap solution reduces to elliptic solitons.
They indicated a completely algorithmic construction of the elliptic soliton moduli.
There are also studies on the elliptic potentials related to the Ablowitz-Kaup-Newell-Segur  spectral problem,
for example, \cite{S-RASSM-1995} from Smirnov and \cite{GW-AM-1998} by Gesztesy and Weikard.

The solutions we derived in this paper can be considered as special excitations of elliptic potentials 
without increasing genus.
In illustration they are  solitons living on an elliptic background.
Strictly speaking, these solutions are not elliptic in $x$ any longer.
We follow \cite{IMRN2010} and still employ the name ``elliptic solitons'' in this paper.

\vskip 20pt


\begin{thebibliography}{99}


\bibitem{82AF} M.J. Ablowitz, A.S. Fokas,
        A direct linearization associated with the Benjamin-Ono equation,
        in {Mathematical Methods in Hydrodynamics and Integrability in Dynamical Systems},
        Eds: M. Tabor, Y.M. Treve,
        AIP Conference Proceedings, 88 (1982) 229-236.

\bibitem{83-AFA-BO}  M.J. Ablowitz,  A.S. Fokas, R.L. Anderson,
        The direct linearizing transform and the Benjamin-Ono equation,
        Phys. Lett. A, 93 (1983) 375-378.
        
\bibitem{ABS03} V.E. Adler, A.I. Bobenko, Y.B. Suris,
         Classification of integrable equations on quad-graphs, the consistency approach,
         Commun. Math. Phys., 233 (2003) 513-543.

\bibitem{AMM-CPAM-1977} H. Airault, H.P. McKean, J. Moser,
         Rational and elliptic solutions of the Korteweg-de Vries equation and a related many-body problem,
         Commun. Pure Appl. Math., 30 (1977) 95-148.

\bibitem{book-Akhiezer} N.I. Akhiezer,
             Elements of the Theory of Elliptic Functions.
             Translated from the Russian edition by H.H. McFaden,
             Translations of Mathematical Monographs 79,
             Amer. Math. Soc., Providence, 1990.

\bibitem{BE-FAA-1989} E.D. Belokolos, V.Z. Enol'skii,
           Verdier's elliptic solitons and the Weierstrass reduction theory,
           Funct. Aanl. Appl., 23 (1989) 46-47.

\bibitem{BE-AAM-1994} E.D. Belokolos, V.Z. Enol'skii,
         Reduction of Theta functions and elliptic finite-gap potentials,
         Acta Appl. Math., 36  (1994) 87-117.

\bibitem{dS-AM-1883ab} C$^{\mathrm{te}}$ de Sparre,
            (On equation $\cdots$) Sur l'equation $\cdots$,
            Acta Math., 3 (1883) 105-140, 289-321.

\bibitem{D-CRASP-1882} G. Darboux,
         On a linear equation (Sur une \'equation lin\'eare),
         Comptes Rendus, 94  (1882) 1645-1648.

\bibitem{DN-JETP-1974} B.A. Dubrovin, S.P. Novikov,
         Periodic and conditionally periodic analogs of the many soliton solutions of the Korteweg-de Vries equation,
         Sov. Phys. JETP, 40 (1975) 1058-1063.

\bibitem{EK-AAM-1994} V.Z. Enol'skii, N.A. Kostov,
         On the geometry of elliptic solitons,
         Acta Appl. Math., 36 (1994) 57-86. 

\bibitem{81FA} A.S. Fokas, M.J. Ablowitz,
        Linearization of the Korteweg--de Vries and Painlev\'{e} II equations,
        Phys. Rev. Lett., 47 (1981) 1096-1099.

\bibitem{82FA} A.S. Fokas,  M.J. Ablowitz,
        Direct linearizations of the Korteweg-de Vries equations,
        in {Mathematical Methods in Hydrodynamics and Integrability in Dynamical Systems},
        Eds: M. Tabor, Y.M. Treve,
        AIP Conference Proceedings, 88 (1982) 237-241.

\bibitem{83-FA-KP} A.S. Fokas, M.J. Ablowitz,
        On the inverse scattering and direct linearizing transforms for the Kadomtsev-Petviashvili equation,
        Phys. Lett. A, 94 (1983) 67-70.

\bibitem{82Frobenius}  F.G. Frobenius,
         Ueber die elliptischen funktionen zweiter art,
         J. Reine Angew. Math., 93 (1882) 53-68.

\bibitem{18F}  W. Fu, Direct linearisation of the discrete-time two-dimensional Toda lattices,
        J. Phys. A: Math. Theor., 51 (2018) No.334001 (21pp).

\bibitem{17FN}  W. Fu, F.W. Njihoff,
        Direct linearizing transform for three-dimensional discrete integrable systems:
        The lattice AKP, BKP and CKP equations,
        Proc. R. Soc. A, 473 (2017) No.20160915 (22pp).

\bibitem{18-FuweiNijhoff} W. Fu,  F.W. Nijhoff,
        Linear integral equations, infinite matrices, and soliton hierarchies,
        J. Math. Phys., 59 (2018) No.071101 (28pp).

\bibitem{21FN}  W. Fu, F.W. Njihoff,
        On non-autonomous differential-difference AKP, BKP and CKP equations,
        Proc. R. Soc. A, 477 (2021) No.20200717 (20pp).

\bibitem{GW-MZ-1995} F. Gesztesy, R. Weikard,
        Treibich-Verdier potentials and the stationary (m)KdV hierarchy,
        Math. Z., 219 (1995) 451-476. 

\bibitem{GW-AM-1996}  F. Gesztesy, R. Weikard,
          Picard potentials and Hill's equation on a torus,
          Acta Math., 176 (1996) 73-107. 

\bibitem{GW-AM-1998}  F. Gesztesy, R. Weikard,
          A characterization of all elliptic algebro-geometric solutions of the AKNS hierarchy,
          Acta Math., 181 (1998) 63-108. 


\bibitem{11-Hie-BSQ} J. Hietarinta,
         Boussinesq-like multi-component lattice equations and multi-dimensional consistency,
         J. Phys. A: Math. Theor., 44   (2011) No.165204 (22pp).

\bibitem{11-HZ-SIGMA}	J. Hietarinta, D.J. Zhang,
        Soliton taxonomy for a modification of the lattice Boussinesq equation,
        SIGMA, 7  (2011) No.061 (14pp).

\bibitem{86-Hirota-PhyD} R. Hirota,
       Reduction of soliton equations in bilinear form,
       Physica D,  18 (1986) 161-170.

\bibitem{I-PRSE-1940}  E.L. Ince,
         Further investigations into the periodic Lam\'e functions,
         Proc. Roy. Soc. Edinburgh, 60 (1940) 83-99. 

\bibitem{23Kakei} S. Kakei,
         Solutions to the KP hierarchy with an elliptic background,  	arXiv: 2310.11679.

\bibitem{K-FAA-1980}  I.M. Krichever,
       Elliptic solutions of the K-P equation and integrable systems of particles,
       Funct. Anal. Appl., 14 (1980) 282-290. 

\bibitem{KBBT-AMST-book-1995}  I.M. Krichever, O. Babelon, E. Billey, M. Talon,
       Spin generalization of the Calogero-Moser system and the matrix KP equation,
       in Topics in Topology and Mathematical Physics,
       Ed. S.P. Novikov,
       Amer. Math. Soc. Transl. Ser. 2, Vol.170,
       Amer. Math. Soc., Providence, 1995, 83-119. 

\bibitem{74KM} E.A. Kuznetsov, A.V. Mikhailov,
          Stability of stationary waves in nonlinear weakly dispersive media,
          Sov. Phys. JETP, 40 (1974) 855-859.

\bibitem{23-LQZ} S.S. Li, C.Z. Qu, D.J. Zhang,
          Solutions to the SU$(N)$ self-dual Yang-Mills equation,
          Physica D, 453 (2023) No.133828 (17pp)

\bibitem{22LxZhang} X. Li, D.J. Zhang,
          Elliptic soliton solutions: $\tau$ functions, vertex operators and bilinear identities,
          J. Nonlinear Sci., 32 (2022) No.70 (53pp).

\bibitem{24LxZhang} X. Li, D.J. Zhang,	
          The Lam\'e functions and elliptic soliton solutions: Bilinear approach,
          in ``Recent Progress in Special Functions'', Contemporary Mathematics of AMS,
          807 (2024) 171-195.

 \bibitem{77MZBIM} S.V. Manakov, V.E. Zakharov, L.A. Bordag, A.R. Its, V.B. Matveev ,
          Two-dimensional solitons of the Kadomtsev-Petviashvili  equation and their interaction,
          Phys. Lett. A, 63 (1977) 205-206.

 \bibitem{MS-LMP-2006}  V.B. Matveev, A.O. Smirnov,
       On the link between the Sparre equation and Darboux-Treibich-Verdier equation,
       Lett. Math. Phys., 76 (2006) 283-295. 

\bibitem{Naka-LMP-2024} A. Nakayashiki,
          Vertex operators of the KP hierarchy and singular algebraic curves,
          Lett. Math. Phys., 114 (2024) No.50 (36pp).

\bibitem{85N} F.W. Nijhoff,
        Theory of integrable 3-dimensional nonlinear lattice equations,
        Lett. Math. Phys.,  9 (1985)  235-241.


\bibitem{IMRN2010}F.W. Nijhoff, J. Atkinson,
          Elliptic $N$-soliton solutions of ABS lattice equations,
          Int. Math. Res. Not., 2010 (2010) 3837-3895.

\bibitem{09NAH} F.W. Nijhoff,  J. Atkinson, J. Hietarinta,
          Soliton solutions for ABS lattice equations: I. Cauchy matrix approach,
          J. Phys. A: Math. Theor., 42 (2009) No.404005 (34pp).

\bibitem{85-NCW} F.W.  Nijhoff, H.W. Capel, G.L. Wiersma,
        Integrable lattice systems in two and three dimensions,
        in: Geometric Aspects of the Einstein Equations and Integrable Systems,
        Ed. R. Martini, Lect. Not. Phys., 239 (1985) 263-302.

\bibitem{92NPCQ} F.W. Nijhoff, V.G. Papageorgiou, H.W. Capel, G.R.W. Quispel,
        The lattice Gel'fand--Dikii hierarchy,
        Inverse Probl., 8 (1992)  597-621.

\bibitem{83-NQC} F.W. Nijhoff, G.R.W. Quispel, H.W. Capel,
        Direct linearisation of difference-difference equations,
        Phys. Lett. A, 97 (1983) 125-128.

\bibitem{83NQVC} F.W. Nijhoff, G.R.W. Quispel, J. van der Linden, H.W. Capel,
        On some linear integral-equations generating solutions of non-linear partial-differential equations,
        Physica A, 119 (1983)  101-142.

\bibitem{NSZ-2019} F.W. Nijhoff, Y.Y. Sun, D.J. Zhang,
         Elliptic solutions of Boussinesq type lattice equations and the elliptic $N$th  root of unity,
         Commun. Math. Phys., 399  (2023) 599-650.

\bibitem{82FVQC} F.W. Nijhoff, J. van der Linden, G.R.W. Quispel, H.W. Capel,
       Linearization of the nonlinear Schr\"odinger equation and the isotropic Heisenberg spin chain,
       Phys. Lett. A, 89 (1982) 106-108.

\bibitem{book-Pastras} G. Pastras,
       The Weierstrass Elliptic Function and Applications in Classical and Quantum Mechanics --
       A Primer for Advanced Undergraduates,
       Springer Nature Switzerland AG, 2020.

\bibitem{82QNC} G.R.W. Quispel, F.W. Nijhoff,  H.W. Capel,
       Linearization of the Boussinesq equation and the modified Boussinesq equation,
       Phys. Lett. A, 91 (1982) 143-145.

\bibitem{84QNCV} G.R.W. Quispel, F.W. Nijhoff, H.W. Capel, J. van der Linden,
        Linear integral-equations and nonlinear difference difference-equations,
        Physica A, 125 (1984) 344-380.

\bibitem{84SAF}  P.M. Santini, M.J. Ablowitz,  A.S Fokas,
        The direct linearization of a class of nonlinear evolution equations,
        J. Math. Phys., 25 (1984)  2614-2619.

\bibitem{S-MN-1989}  A.O. Smirnov,
       Elliptic solutions of the Korteweg-de Vries equation,
       Math. Notes., 45  (1989)  467-481. 

\bibitem{S-AAM-1994}  A.O. Smirnov,
       Finite-gap elliptic solutions of the KdV equation,
       Acta Appl. Math., 36 (1994)  125-166. 

\bibitem{S-RASSM-1995}  A.O. Smirnov,
        Elliptic solutions of the nonlinear Schr\"odinger equation and the modified Korteweg-de Vries equation,
        Russ. Acad. Sci. Sb. Math.   82 (1995) 461-470. 

\bibitem{S-CRMP-2002}  A.O. Smirnov,
       Elliptic solitons and Heun's equation,
       in The Kowalevski Property,
       Ed. V.B. Kuznetsov,
       CRM Proc. Lect. Notes., Vol.32, 
       Amer. Math. Soc., Providence, 2002, pp287-305. 

\bibitem{S-LMP-2006} A.O. Smirnov,
       Finite-gap solutions of the Fuchsian equations,
       Lett. Math. Phys., 76 (2006) 297-316. 


\bibitem{book-AbraSte}  T.H. Southard,
        \S 18. Weierstrass elliptic and related functions,
        in Handbook of Mathematical Functions with Formulas, Graphs, and Mathematical Tables,
        M. Abramowitz and I.A. Stegun (eds.), 9th printing, Dover, New York, 1972.

\bibitem{17-SunZhangNijhoff} Y.Y. Sun, D.J. Zhang, F.W.  Nijhoff,
        The Sylvester equation and the elliptic Korteweg-de Vries system,
        J. Math. Phys., 58 (2017) No.033504 (25pp).

\bibitem{T-CMP-2003}  K. Takemura,
       The Heun equation and the Calogero-Moser-Sutherland system I: The Bethe ansatz method,
       Commun. Math. Phys., 235 (2003) 467-494. 

\bibitem{T-DMJ-1989}  A. Treibich,
       Tangential polynomials and elliptic solitons,
       Duke Math. J.,  59  (1989) 611-627. 

\bibitem{T-AAM-1994}  A. Treibich,
       New elliptic potentials,
       Acta Appl. Math., 36 (1994) 27-48. 

\bibitem{T-DMJ-1997}  A. Treibich,
      Matrix elliptic solitons,
      Duke Math. J., 90 (1997) 523-547. 

\bibitem{T-RMS-2001}  A. Treibich,
       Hyperelliptic tangential covers and finite-gap potentials,
       Russ. Math. Surv., 56 (2001) 89-136. 

\bibitem{TV-G-book-1990}  A. Treibich, J.-L. Verdier,
       (Elliptic solitons) Solitons elliptiques,
       in The Grothendieck Festschrift, III.
       A Collection of Articles Written in Honor of the 60th Birthday of Alexander Grothendieck,
       Eds. P. Cartier, L. Illusie, N.M. Katz, et al.,
       Birkh\"auser-Boston, Boston, 1990, pp437-480.  

\bibitem{TV-CRASP-1990}  A. Treibich, J.-L. Verdier,
       (Tangential covers and sums of 4 triangular numbers)
       Rev$\hat{\mathrm{e}}$tements tangentiels et sommes de 4 nombres triangulaires,
       C. R. Acad. Sci. Paris, t. 311, Series I (1990) 51-54. 

\bibitem{V-AA-book-1988}  J.-L. Verdier,
       New elliptic solitons,
       in Algebraic Analysis, Vol.2, special volume dedicated to Prof. M. Sato on his 60th birthday,
       Eds. M. Kashiwara, T. Kawai,
       Academic Press, New York, 1988, pp901-910. 
       
\bibitem{V-LMP-2011}  A.P. Veselov, 
       On Darboux-Treibich-Verdier potentials, 
       Lett. Math. Phys., 96 (2011) 209-216. 

\bibitem{74Wahlqulst}   H.D. Wahlquist,
         B\"{a}cklund transformations of potentials of the Korteweg-de Vries equation
         and the interaction of solitons with conidal waves,
         In: B\"acklund Transformations, the Inverse Scattering Method, Solitons, and Their Applications,
         Ed. R.M. Miura,
         Springer-Verlag, Berlin, 1976, pp162-183.

\bibitem{14-XuZhangZhao} D.D. Xu, D.J. Zhang, S.L. Zhao,
        The Sylvester equation and integrable equations: I. The Korteweg-de Vries system and sine-Gordon equation,
          J. Nonlinear Math. Phys., 21 (2014)  382-406.

\bibitem{21YF} Y. Yin, W. Fu,
        Linear integral equations and two-dimensional Toda systems,
        Stud. Appl. Math., 147 (2021) 1146-1193.

\bibitem{2013-YN-JMP} S. Yoo-Kong, F.W. Nijhoff,
         Elliptic $(N, N^\p)$-soliton solutions of the lattice Kadomtsev-Petviashvili equation,
         J. Math. Phys., 54 (2013) No.043511 (20pp).

\bibitem{74ZS} V.E. Zakharov,  A.B. Shabat,
        A scheme for integrating the nonlinear equations of mathematical physics by the method of
        the inverse scattering problem. I.,
        Funct. Anal. Appl., 8  (1974) 226-235.

\bibitem{13ZZ} D.J. Zhang, S.L. Zhao,
        Solutions to the ABS lattice equations via generalized Cauchy matrix approach,
        Stud. Appl. Math., 131 (2013) 72-103.

\bibitem{12ZZN} D.J. Zhang, S.L. Zhao,  F.W. Nijhoff,
       Direct linearisation of the extended lattice BSQ system,
       Stud. Appl. Math., 129 (2012)  220-248.


%
%
%
%
%
%
%
%
%
%
%
%
%
%
%
%
%
%
%
%
%
%
%
%
%
%
%
%
%
%
%
%
%
%
%
%
%
%
%
%
%
%
%
%
%
%
%
%
%
%


\end{thebibliography}
\end{document}